\newcommand{\RN}[1]{%
  \textup{\uppercase\expandafter{\romannumeral#1}}%
}
\begin{document}
\title{LP Formulations of sufficient statistic based strategies in Finite Horizon Two-Player Zero-Sum Stochastic Bayesian games
}
%\subtitle{Do you have a subtitle?\\ If so, write it here}

%\titlerunning{Short form of title}        % if too long for running head

\author{Nabiha Nasir Orpa         \and
        Lichun Li %etc.
}

%\authorrunning{Short form of author list} % if too long for running head

\institute{Nabiha Nasir Orpa and Lichun Li \at
              Department of Industrial and Manufacturing Engineering, FAMU-FSU College of Engineering, 2525 Pottsdamer St, Tallahassee, FL 32310, USA\\
              \email{no18k@my.fsu.edu, lichunli@eng.famu.fsu.edu}
}
\date{Received: date / Accepted: date}
% The correct dates will be entered by the editor
\maketitle
\begin{abstract}
This paper studies two-player zero-sum stochastic Bayesian games where each player has its own dynamic state that is unknown to the other. Generally speaking, players compute their strategies based on the history information which grows exponentially with the time horizon of the game. To save the memory of the players, this paper provides LP formulations to compute the sufficient statistic based optimal strategy and to update a fully accessible sufficient statistic. Because the size of the LPs grows exponentially in time horizon, the LP formulation cannot be directly applied in to cases with long time horizon. To address this problem, we apply a short horizon game repeatedly and analyze the performance. The main results are demonstrated in a security problem of underwater sensor networks.

\keywords{Game theory \and Stochastic \and LP \and Zero-sum \and Dual game}
% \PACS{PACS code1 \and PACS code2 \and more}
% \subclass{MSC code1 \and MSC code2 \and more}
\end{abstract}
\section{Introduction}
\label{intro}
Because of the multi-agent nature, game theory has great potential in solving or explaining economic, social, and engineering problems. Game theory has been used in addressing AdWord problems \cite{charles2013budget}, enhancing the security of Los Angeles airport \cite{pita2009using}, advising in presidential election and nuclear disarmament \cite{feddersen2006theory,aumann1995repeated}, explaining and anticipating disease spreading \cite{eksin2017disease}, and many other problems. One common property of these problems is that the individuals or agents in the problems have their own private information not shared with the others. For example, didders in AdWord problems may not reveal its budget to the other bidders.

If one or more agents in a game don't have complete information about the game, we call the game a game with incomplete information, which was first introduced in \cite{doi:10.1287/mnsc.1040.0297}. In this case, a player in the game makes its strategy according to its observations like the other players' actions and/or its own payoff. Two-player zero-sum games with incomplete information are special cases of games with incomplete information, and the focus of this paper.

This paper studies two player zero-sum stochastic games with incomplete information on both sides, which are also called stochastic Bayesian games. In these games, both players have their own types (private information) which will change stage by stage, and the payoffs depend on both players' types. We assume that both players can observe the actions of each other, and the one-stage payoffs won't be revealed until the end of the game.

The existing literature \cite{aumann1995repeated,2226507620060801, Rosenberg1998,sorin2003stochastic,de1996repeated,sorin2002first,zamir1992repeated,mertens1971value, 8567999,li2019efficient,rosenberg2004stochastic,gensbittel2015value} had rich descriptive results about Bayeisan two-player zero-sum games. \cite{aumann1995repeated, de1996repeated, de1999cav, mertens1994repeated, mertens1971value, sorin2002first, zamir1992repeated} showed the recursive formula and the existence of game value in repeated games, and \cite{gensbittel2015value, 2226507620060801, Rosenberg1998, rosenberg2004stochastic, sorin2003stochastic} extended the results to stochastic games. While most work directly studied the primal games, \cite{de1996repeated, de1999cav, Rosenberg1998} studied the dual games and provided fully accessible sufficient statistics in the dual games. It was shown that with some special initial parameter, the optimal strategy in the dual game is the optimal strategy in the primal game.

This paper focuses on prescriptive results, i.e., computationally workable strategies in the primal game. With finite horizon, the Bayesian stochastic game can be transferred to a finite game tree, and sequence form \cite{von1996efficient} can be used to develop LP formulation to compute the optimal strategy. This strategy is based on action history which will grow exponentially with respect to the time horizon. To save the memory of the player, this paper provides LP formulations for sufficient statistic based optimal strategy. According to the previous results in the literature, we need to find the special initial sufficient statistic in the dual game, compute the optimal strategy in the dual game, and then update the sufficient statistic for the next step. While the LP formulation in \cite{von1996efficient} can  compute the optimal strategy in the dual game with some adjustment, we need to know how to compute the initial sufficient statistic and update the sufficient statistic in the dual games. To solve this problems, we analyze the LP formulation in the primal game explicitly, and construct an LP to compute both the initial sufficient statistic of the dual game and the optimal strategy of the primal game simultaneously. Based on this LP, we further develop the LPs in the dual games to compute the optimal strategy and to update the sufficient statistics.

Using these LPs directly for long Bayesian games is a computationally heavy approach as the size of the LPs grows exponentially with respect to the time horizon of the game. To address this issue we propose an algorithm which divides the game into multiple small windows and compute the optimal strategies of the players for each window. We analyze the performance of this window-by-window method, and show that the difference between the worst case performance of the window-by-window method and the game value is bounded.

The rest of the paper is organized as follows. The structure and parameters of the game are described in section \ref{Game model}. The concept of primal and dual games with their properties is provided in section \ref{preliminary results}. The motivation and the problem statement of this paper is in section \ref{Problem Statement}. Section \ref{LPs for sufficient statistics based strategies} provides the LPs and sufficient statistic based algorithm to compute the optimal strategies of the players. The performance bound of this algorithm is provided in section \ref{performance analysis}. Section \ref{case study} computes the optimal strategies of the players using the sufficient statistic based algorithm for underwater acoustic sensor network jamming problem and shows the results satisfy the performance bound. 
\section{Game model}
\label{Game model}
Let $\mathbb{R}^n$ denotes the n-dimensional real space, and $\mathcal{K}$ be a finite set. The cardinality of $\mathcal{K}$ and the set of probability distribution over $\mathcal{K}$ is denoted by $|\mathcal{K}|$ and $\Delta(\mathcal{K})$, respectively. A two-player zero-sum stochastic game with incomplete information on both sides is specified by the nine-tuple $\mathcal{(K,L,}\mathcal{A,B},p,q,
P,Q,G)$, where
\begin{itemize}
  \item $\mathcal{K}$ and $\mathcal{L}$ are non-empty finite sets, called player $1$ and $2$'s state sets, respectively.
  \item $\mathcal{A}$ and $\mathcal{B}$ are non-empty finite sets, called player $1$ and $2$'s action sets, respectively where $a_t \in \mathcal{A}$ is the action of player $1$ and $b_t \in \mathcal{B}$ is the action of player $2$ at stage $t$.
  \item $G_{k,l} \in {\Bbb R^{\mathcal{|A|}\times \mathcal{|B|}}}$ is the payoff matrix given player $1$'s state $k \in \mathcal{K}$ and player $2$'s state $ l \in \mathcal{L} $.All the elements of $G$ are non negative. The element $ G_{k,l}(a,b) $ is player $1$'s one stage payoff or player $2$'s one stage cost if the state of player $1$ and $2$ are $ k $ and $ l $, respectively and the current action of player $1$ and 2 are $ a $ and $ b $, respectively.
  \item $ p \in \Delta(\mathcal{K})$ is the initial probability of $\mathcal{K}$ and $q \in \Delta(\mathcal{L}) $ is the initial probability of $\mathcal{L}$.
  \item $P \in {\Bbb R}^{|K|\times |K|} $ and $Q \in {\Bbb R}^{|L|\times |L|}$ are player $1$ and $2$'s transition matrices, respectively. $P_{a_{t},b_{t}}(k_{t},k_{t+1})$ is a conditional probability of the next state of player $1$, $k_{t+1}$, given the current actions of both players and current state of player 1. Similarly, $Q_{a_{t},b_{t}}(l_{t},l_{t+1})$ is a conditional probability of the next state of player $2$, $l_{t+1}$, given the current actions of both players and current state of player $2$.
\end{itemize}

The nine-tuple $\mathcal{(K,L,A,B},p,q,P,Q,G)$ is the common knowledge of the two players. At stage $t=1$, the initial state of player 1 and player 2 are chosen independently by nature according to the initial probability $ p$ and $ q $, respectively. At stage, $ t\geq 2 $, the state of player 1 and 2 are chosen according to the transition probability  $P_{a_{t-1},b_{t-1}}(k_{t-1},k_{t})$ and $Q_{a_{t-1}, b_{t-1}}(l_{t-1},l_{t})$, respectively. Each player does not know the state of the opponent player. Once the initial state is chosen, at stage $ t=1,2...,N$, each player simultaneously chooses its action which is observed by both players.  This is a perfect monitoring and perfect recall game, i.e. every player can observe the current actions of both players, and record all history actions of both players. At stage $ t $, $ G_{k_t,l_t}(a_{t},b_{t}) $ is the one stage payoff of player 1, and the one stage cost of player 2. None of the players can observe the one stage payoff, and the total payoff is revealed to both players at the end of the game.

At the beginning of stage $ t $, the available information of player 1 and 2 is denoted by $\mathcal{I}_t =\lbrace k_1,a_1,b_1,....,k_{t-1},a_{t-1},
b_{t-1},k_{t}\rbrace$ and $\mathcal{J}_t =\lbrace l_1,a_1,b_1,...,l_{t-1},\\ a_{t-1}, b_{t-1},l_{t}\rbrace $, respectively. The behavioral strategy of player 1 and 2 at stage \textit{t} are $\sigma_t$ and $\tau_t$, respectively, where $\sigma_t: \mathcal{I}_t \mapsto \Delta(\mathcal{A})$ and $\tau_t: \mathcal{J}_t \mapsto \Delta(\mathcal{B})$. $\sigma_t$ and $\tau_t$ are the probability distributions over player $1$'s actions $a_t$ and player $2$'s action $b_t$ at stage $t$, respectively. $\Sigma$ and $ \mathcal{T}$ are the sets of behavior strategies of player 1 and 2, respectively. Strategy of player 1, $ \sigma\in\Sigma$, is a sequence of $ \sigma_{t}$ and strategy of player 2, $\tau\in \mathcal{T}$, is a sequence of $\tau_{t}$. The payoff with initial probabilities $ p, q $ and strategies $ \sigma,\tau $ of an \textit{N}-stage $\lambda$ discounted game with $\lambda\in (0,1]$ is defined as
\begin{align*}
\gamma_{N,\lambda}(p,q,\sigma,\tau)=& E_{p,q,\sigma,\tau} \Big(\sum\limits_{t=1}^{N} \lambda^{t-1}G_{k_{t},l_{t}}(a_{t},b_{t})\Big)
%& for\ \lambda\in (0,1]
\end{align*}
Here, $\lambda\in (0,1)$ if $N$ is infinite.

An $N$-stage $ \lambda $ discounted game $ \Gamma_{N,\lambda}(p,q) $ is defined as two player zero-sum stochastic bayesian game equipped with initial probability $ p $ and $ q $, strategy spaces $ \Sigma $  and $ \mathcal{T} $ and payoff function $ \gamma_{N,\lambda}(p,q,\sigma,\tau)$. If $\lambda =1$ and $N$ is finite then it represents $N$-stage game payoff. If $\lambda < 1$ and $N$ is infinite then it is a discounted game. If $\lambda < 1$ and $N$ is finite then it is a truncated discounted game. Here, we exclude the case when $N=\infty$ and $\lambda=1$. In $ \Gamma_{N,\lambda}(p,q) $, player $1$ wants to maximize the payoff and player $2$ wants to minimize it. Therefore, player $1$ has a security level $\underline{v}_{N}(p,q)$, which is also called the maxmin value of the game, defined as
\begin{align*}
\underline{v}_{N,\lambda}(p,q)= \max\limits_{\sigma\in\Sigma} \min\limits_{\tau\in \mathcal{T}} \gamma_{N,\lambda}(p,q,\sigma,\tau)
\end{align*}
A strategy $ \sigma^{\ast}\in \Sigma $ which can ensure player $1$'s security level, i.e. $ \underline{v}_{N,\lambda}(p,q)= \min\limits_{\tau\in \mathcal{T}} \gamma_{N,\lambda}(p,q,\sigma^{\ast},\tau) $, is called the security strategy of player $1$.

Similarly, player $2$’s security level $\overline{v}_{N,\lambda}(p,q)$, which is also called the minmax value of the game, is defined as
\begin{align*}
\overline{v}_{N,\lambda}(p,q)= \min\limits_{\tau\in \mathcal{T}} \max\limits_{\sigma\in\Sigma}  \gamma_{N,\lambda}(p,q,\sigma,\tau)
\end{align*}
The security strategy of player $2$, $ \tau^{\ast}\in \mathcal{T} $ guarantees player $2$'s security level.

When $\underline{v}_{N,\lambda}(p,q)= \overline{v}_{N,\lambda}(p,q)$, we can say the game has a value denoted by $ v_{N,\lambda}(p,q)= \underline{v}_{N,\lambda}(p,q)= \overline{v}_{N,\lambda}(p,q) $. According to Theorem 3.2 in \cite{mertens1994repeated}, this game has a value if $N$ is finite or $\lambda\in (0,1)$.

\section{Preliminary results}
\label{preliminary results}
\subsection{Primal games and the related results}
\label{sec: primal game}
The N-stage lambda-discounted game is also called the primal game in this paper. It is well known that the primal game has a recursive formula to compute the security strategies of both players, and the believes which will be introduced later, are both players' sufficient statistics. The proof methods in the preliminary results are all typical. We include the proofs in the supplementary materials for reader's convenience.

In the primal game, there are two essential variables $p_t(k_t)=Pr(k_t|\mathcal{I}_t)$ and $q_t(l_t)=Pr(l_t|\mathcal{J}_t)$, which are called the believes of $k_t$ and $l_t$, respectively. With initial condition $p_1=p$ and $q_1=q$, the believes are updated as follows.
\begin{align}
p_{t+1}(k')=&\sum\limits_{k} P_{a,b}(k,k')\frac{p_t(k)X(a,k)}{\bar{x}_{p_t,X}(a)}\doteq p^+_{p_t,X}(k') \label{eq: p+}\\
q_{t+1}(l')=&\sum\limits_{l} Q_{a,b}(l,l')\frac{q_t(l)Y(b,l)}{\bar{y}_{q_t,Y}(b)}\doteq q^+_{q_t,Y}(l') \label{eq: q+}
\end{align}
where $ X(:,k) \in \Delta\mathcal{(A)} $ and $ Y(:,l) \in \Delta\mathcal{(B)} $ are player 1 and 2's strategies at stage $t$ given state $ k$ and $ l $, respectively, $\bar{x}_{p_t,X}(a)= \sum\limits_{k} p_t(k)X(a,k)$, and $\bar{y}_{q_t,Y}(b)= \sum\limits_{l} q_t(l)Y(b,l)$. Notice that the belief $p$ depends on player $1$'s strategy and $q$ depends on player $2$'s strategy.
%\begin{lemma}
%\label{lemma: primal, concavity}
%In primal game $ \Gamma_{N,\lambda}(p,q) $, the game value $v_{N,\lambda}(p,q)$ is concave in $ p $ and convex in $ q $.
%\end{lemma}
For the simplicity of the recursive formula that is given in Lemma \ref{lemma: primal, recursive}, we define
\begin{align*}
\bar{G}(p,q,X,Y)= & \sum\limits_{k,l} p(k)q(l)\sum\limits_{a,b}X(a,k)G_{k,l}(a,b)Y(b,l) \\ %\label{eqn 1}
\mathfrak{T}_{p,q,X,Y}(v)=&  \bar{G}(p,q,X,Y)+\lambda \sum_{a\in \mathcal{A}}\sum_{b\in \mathcal{B}} \bar{x}_{p,X}(a) \bar{y}_{q,Y}(b) v(p^{+}_{p,X},q^{+}_{q,Y})
\end{align*}
\begin{lemma}
\label{lemma: primal, recursive}
Consider a primal game $ \Gamma_{N,\lambda}(p,q) $. Let $p_t,q_t$ be the believes at stage $t=1,\ldots,N$ with $n=N+1-t$ stages to go. The game value of the primal game satisfies the following recursive formula
\begin{align}
v_{n,\lambda}(p_t,q_t)=& \max\limits_{X\in\Delta \mathcal{(A)}^\mathcal{K}} \min\limits_{Y\in \Delta (\mathcal{B})^\mathcal{L}} \mathfrak{T}_{p_t,q_t,X,Y}(v_{n-1,\lambda}) \label{eq: primal, recursive, player 1}\\
=&\min\limits_{Y\in \Delta (\mathcal{B})^\mathcal{L}} \max\limits_{X\in\Delta (\mathcal{A})^\mathcal{K}} \mathfrak{T}_{p_t,q_t,X,Y}(v_{n-1,\lambda}) \label{eqn 2}
\end{align}
The optimal solution $X^*$ to equation (\ref{eq: primal, recursive, player 1}) and the optimal solution $Y^*$ to equation (\ref{eqn 2}) are player 1 and 2's security strategies at stage $t$, respectively. Moreover, player 1 and 2's security strategies only depend on $p_t$ and $q_t$, together with $t$ if $N$ is finite. Thus $p_t$ and $q_t$ are the sufficient statistics of the players.
\end{lemma}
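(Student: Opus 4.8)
I would prove both recursion identities together with the characterisation of the security strategies by backward induction on the number of stages to go $n=N+1-t$. The engine is the one‑stage decomposition of the discounted payoff: if at stage $t$ player~1 uses a behavioural strategy $X$ (play $X(\cdot,k)$ in private state $k$) and player~2 uses $Y$, then the first of the $n$ remaining stages contributes exactly $\bar G(p_t,q_t,X,Y)$; the realised pair $(a,b)$ then becomes common knowledge and moves the beliefs to $p^+_{p_t,X}(a)$ and $q^+_{q_t,Y}(b)$ with probabilities $\bar x_{p_t,X}(a)$ and $\bar y_{q_t,Y}(b)$; and from stage $t+1$ on the players face the same primal game with $n-1$ stages to go started at $\big(p^+_{p_t,X}(a),q^+_{q_t,Y}(b)\big)$. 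Averaging over $(a,b)$ and inserting the discount $\lambda$, the value of ``committing to $(X,Y)$ now and playing optimally afterwards'' is exactly $\mathfrak T_{p_t,q_t,X,Y}(v_{n-1,\lambda})$, which is why this operator governs the recursion.

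\textbf{Base case and the minimax swap.} For $n=1$ (i.e.\ $t=N$) there is no continuation, $\mathfrak T_{p_N,q_N,X,Y}(v_{0,\lambda})=\bar G(p_N,q_N,X,Y)$ under the convention $v_{0,\lambda}\equiv0$, and $v_{1,\lambda}(p_N,q_N)$ is the value of the one‑shot zero‑sum game with this bilinear payoff on the compact convex sets $\Delta(\mathcal A)^{\mathcal K}$ and $\Delta(\mathcal B)^{\mathcal L}$, so von Neumann's minimax theorem gives the value and the equality of $\max_X\min_Y$ with $\min_Y\max_X$; the optimal $X^*$ (resp.\ $Y^*$) is by definition the stage-$N$ security strategy. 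For the inductive step, with $v_{n-1,\lambda}$ in hand, I would first check that $(X,Y)\mapsto\mathfrak T_{p_t,q_t,X,Y}(v_{n-1,\lambda})$ is concave in $X$ for each $Y$ and convex in $Y$ for each $X$: $\bar G$ is bilinear, and in the continuation term
\[
\sum_{a,b}\bar x_{p_t,X}(a)\,\bar y_{q_t,Y}(b)\,v_{n-1,\lambda}\!\big(p^+_{p_t,X}(a),q^+_{q_t,Y}(b)\big)
=\sum_b\bar y_{q_t,Y}(b)\sum_a\bar x_{p_t,X}(a)\,v_{n-1,\lambda}\!\big(p^+_{p_t,X}(a),q^+_{q_t,Y}(b)\big),
\]
for fixed $Y$ each inner summand is the perspective of the map $p\mapsto v_{n-1,\lambda}(p,q^+_{q_t,Y}(b))$ — concave, since $v_{n-1,\lambda}$ is concave in its first argument by the preceding ``Game value'' result — evaluated at the quantities $\bar x_{p_t,X}(a)$ and $\bar x_{p_t,X}(a)\,p^+_{p_t,X}(a)=\big(p_t(k)X(a,k)\big)_{k\in\mathcal K}$, both affine in $X$; hence each inner sum, and the nonnegative combination over $b$, is concave in $X$. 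The symmetric computation, using convexity of $v_{n-1,\lambda}$ in $q$, gives convexity in $Y$. Sion's minimax theorem then yields $\max_X\min_Y\mathfrak T_{p_t,q_t,X,Y}(v_{n-1,\lambda})=\min_Y\max_X\mathfrak T_{p_t,q_t,X,Y}(v_{n-1,\lambda})$; write $w_{n,\lambda}(p_t,q_t)$ for this common quantity and let $X^*,Y^*$ be optimisers of (\ref{eq: primal, recursive, player 1}) and (\ref{eqn 2}).

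\textbf{The two inequalities and the conclusions.} It remains to identify $v_{n,\lambda}(p_t,q_t)=w_{n,\lambda}(p_t,q_t)$. For $v_{n,\lambda}\ge w_{n,\lambda}$, let player~1 play $X^*$ at stage $t$ and, after observing the action pair, switch at stage $t+1$ to a security strategy of the continuation game with $n-1$ stages to go started at the updated beliefs; against any stage-$t$ behaviour $Y$ of player~2 and any continuation, the one‑stage decomposition together with the induction hypothesis applied subgame by subgame (the updated beliefs forming a martingale, $\sum_a\bar x_{p_t,X^*}(a)p^+_{p_t,X^*}(a)=p_t$ and $\sum_b\bar y_{q_t,Y}(b)q^+_{q_t,Y}(b)=q_t$) gives a payoff at least $\mathfrak T_{p_t,q_t,X^*,Y}(v_{n-1,\lambda})\ge\min_Y\mathfrak T_{p_t,q_t,X^*,Y}(v_{n-1,\lambda})=w_{n,\lambda}(p_t,q_t)$; the inequality $v_{n,\lambda}\le w_{n,\lambda}$ follows symmetrically with player~2 playing $Y^*$ and then a continuation security strategy. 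This proves (\ref{eq: primal, recursive, player 1})--(\ref{eqn 2}), shows that $X^*$ (resp.\ $Y^*$) prolonged by the recursively chosen continuation security strategies is a security strategy at stage $t$, and — since $X^*=X^*(p_t,q_t)$ depends only on the current beliefs while the continuation pieces depend, by induction, only on the later beliefs, which each player reconstructs from the commonly observed actions — shows the security strategies depend only on $(p_t,q_t)$; when $N<\infty$ they depend in addition on $t$ because the number of stages to go $n=N+1-t$ enters the recursion, whereas when $N=\infty$ there are always infinitely many stages to go and that $t$‑dependence is absent.

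\textbf{Main obstacle.} The delicate part is the first inequality (and its mirror): turning ``play $X^*$ now, be optimal afterwards'' into a bona fide guarantee, since player~1 does not observe player~2's stage-$t$ randomisation $Y$ and therefore cannot by itself pin down the continuation belief $q^+_{q_t,Y}(b)$ that selects the correct continuation security strategy. Making this rigorous is precisely where the martingale structure of the belief process and the concavity of $v_{n-1,\lambda}$ in $p$ / convexity in $q$ are used — via the standard splitting argument — to show that, once the continuation payoffs are averaged over the realised action pair, they are still controlled from below by $\sum_{a,b}\bar x_{p_t,X^*}(a)\bar y_{q_t,Y}(b)\,v_{n-1,\lambda}\big(p^+_{p_t,X^*}(a),q^+_{q_t,Y}(b)\big)$; the remaining bookkeeping (propagating the discount factor, and verifying that the stage-$t$ optimiser genuinely decouples from the $(n-1)$-stage subgames) is routine.
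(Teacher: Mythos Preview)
Your approach is essentially the paper's: a one–stage decomposition of the discounted payoff, concavity of $\mathfrak T_{p,q,X,Y}(v_{n-1,\lambda})$ in $X$ and convexity in $Y$ obtained from the concavity/convexity of $v_{n-1,\lambda}$, Sion's theorem to equate the $\max\min$ and $\min\max$, and then the two inequalities by having each player commit to the one–stage optimiser and continue optimally. Your perspective–function argument for concavity is a cleaner packaging of exactly the computation the paper carries out by hand with $X=\Theta X'+(1-\Theta)X''$.

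One comment on your ``main obstacle''. The paper does not invoke any splitting argument here: in its proof of $v_n\ge f_n$ it simply lets player~1 play $X^*$ at the first stage and then ``the optimal strategy $\sigma_a$'' in the continuation, asserting $\gamma_{n-1,\lambda}(p^+_{X^*},q^+,\sigma_a,\tau_{a,b})\ge v_{n-1,\lambda}(p^+_{X^*},q^+)$ and chaining inequalities. For the \emph{value} identity this is harmless once one remembers that the $n$–stage game already has a value, so one may just as well lower–bound $\min_\tau\max_\sigma\gamma_{n,\lambda}$, where the inner best response is allowed to see $\tau$ (hence $Y$ and $q^+$); no splitting or extra concavity is needed at that step, and your worry is over–stated there. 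Where your concern \emph{is} legitimate is the stronger claim that $X^*$, prolonged recursively, is a bona fide security strategy depending only on $(p_t,q_t)$ --- but the paper's proof is no more explicit than yours on that point, so your sketch is at least as complete as the original.
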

The problem of this sufficient statistic $(p_t,q_t)$ is that the belief pair depends on both players' strategy. This requirement is hard to satisfy in zero sum games where players' objectives are against each other. For example, many security problems can be treated as zero-sum games \cite{aziz2020resilience,7417412}, and in these games defenders and attackers will not share their strategies. In this case, no player has full access to the belief pair $(p_t,q_t)$.
\subsection{Dual game and the related results }
\label{sec: dual game}
While the sufficient statistic of primal game is not fully accessible, it is known that the dual game of the primal game has fully accessible sufficient statistic, and with some specially designed initial parameters, the security strategies of the primal game can be recovered by the security strategies in the dual games. These results open a door for strategies based on fully accessible sufficient statistics.

The dual games are rooted from the Fenchel's conjugate of the primal game's game value \cite{sorin2002first,de1999cav}. The Fenchel's conjugate of  $v_{N,\lambda}(p,q)$ regarding $p$ is the game value of type $1$ dual game, and the Fenchel's conjugate of $v_{N,\lambda}(p,q)$ regarding $q$ is the game value of type $2$ dual game.

Type $1$ dual game can be specified by the nine-tuple $ \mathcal{(K,L,A,B},\mu,q,P,Q,G) $, where  $\mathcal{K,L,A},$ $\mathcal{B},q,P,Q,G $ are defined the same as in the primal game and $\mu\in {\Bbb R}^{|\mathcal{K}|} $ is the initial vector payoff over player $1$'s state.
Type $1$ dual game $ \Tilde{\Gamma}_{N,\lambda}^{1}(\mu,q) $ is played similarly as in the primal game $ \Gamma_{N,\lambda}(p,q) $ except that the initial state of player $1$ is chosen by itself rather than the nature. If $ p $ is player $1$'s strategy to choose its initial state, the payoff is
\begin{align*}
\Tilde{\gamma}^{1}_{N,\lambda}(\mu,q,p,\sigma,\tau)= E_{p,q,\sigma,\tau}\Big(\mu(k_1)&+\sum\limits_{t=1}^{N}\lambda^{t-1}
 G_{k_t,l_t}(a_t,b_t)\Big)
\end{align*}

Similarly, type 2 dual game is specified by the nine-tuple $ \mathcal{(K,L,A,B},p,\nu,P,Q,\\
G) $, where  $ \mathcal{K,L,A,}\mathcal{B}, p,P,Q,G $ are defined the same as in the primal game and $\nu\in {\Bbb R}^{|\mathcal{L}|} $ is the initial vector payoff over player 2's state. Type 2 dual game $ \Tilde{\Gamma}_{N,\lambda}^{2}(p,\nu) $ is played similarly as in the primal game $ \Gamma_{N,\lambda}(p,q) $ except that the initial state of player 2 is chosen by itself rather than the nature. If $ q $ is player 2's strategy to choose its initial state, then the payoff function is
\begin{align*}
\Tilde{\gamma}^{2}_{N,\lambda}(p,\nu,q,\sigma,\tau)= E_{p,q,\sigma,\tau}\Big(\nu(l_1)&+\sum\limits_{t=1}^{N}\lambda^{t-1} G_{k_t,l_t}(a_t,b_t)\Big)
\end{align*}
In both dual games, player 1 wants to maximize the payoff and player 2 wants to minimize it. The game values of type 1 and 2 dual games are denoted by $w_{N,\lambda}^1(\mu,q)$ and $w^{2}_{N,\lambda}(p,\nu)$, respectively. The game values of the primal and dual games have the following relationship.

\begin{proposition}
\label{proposition 3.1}
Let $v_{n,\lambda}(p,q)$ be the game value of the primal game $\Gamma_{n,\lambda}(p,q)$, and $w_{n,\lambda}^1(\mu,q)$ and $w^{2}_{n,\lambda}(p,\nu)$ be the game values of type 1 and 2 dual games, respectively. We have
\begin{align}
v_{n,\lambda}(p,q)=&\min_{\mu} w^{1}_{n,\lambda}(\mu,q)-p^{T}\mu \label{eqn 3}\\
v_{n,\lambda}(p,q)=&\max_{\nu} w^{2}_{n,\lambda}(p,\nu)-q^{T}\nu \label{eqn 4}\\
w^{1}_{n,\lambda}(\mu,q)=& \max_{p} p^{T}\mu +v_{n,\lambda}(p,q) \label{eqn 5}\\
w^{2}_{n,\lambda}(p,\nu)= &\min_{q} q^{T}\nu +v_{n,\lambda}(p,q) \label{eqn 6}
\end{align}
Moreover, player 2's optimal strategy in type 1 dual game $\Tilde{\Gamma}^{1}_{n,\lambda}(\mu^{*},q)$ is also its optimal strategy in primal game $\Gamma_{n,\lambda}(p,q)$, where $\mu^*$ is the optimal solution to equation (\ref{eqn 3}) and player 1's optimal strategy in type 2 dual game $\Tilde{\Gamma}^{2}_{n,\lambda}(p,\nu^{*})$ is also its optimal strategy in primal game $\Gamma_{N,\lambda}(p,q)$, where $\nu^*$ is the optimal solution to equation (\ref{eqn 4}).
\end{proposition}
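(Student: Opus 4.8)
\emph{Plan.} I would prove the four identities in the order (\ref{eqn 5}), (\ref{eqn 6}), then (\ref{eqn 3}), (\ref{eqn 4}), and finally read the strategy correspondence off the optimality conditions produced along the way. Identities (\ref{eqn 5}) and (\ref{eqn 6}) are the ``forward'' statements expressing the dual values as conjugates of the primal value; identities (\ref{eqn 3}) and (\ref{eqn 4}) then follow by Fenchel biconjugation, using that $v_{n,\lambda}$ is concave in $p$ and convex in $q$ (the concavity/convexity of the game value proved above) together with the recursive structure of Lemma \ref{lemma: primal, recursive}.

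\emph{Identities (\ref{eqn 5}) and (\ref{eqn 6}).} Recall (from the main paper) that in the type~1 dual game $\Tilde{\Gamma}^{1}_{n,\lambda}(\mu,q)$ player~1 first chooses a prior $p\in\Delta(\mathcal{K})$, nature draws the type according to $p$ and reveals it to player~1, player~1 collects the expected bonus $p^{T}\mu$, and from then on the play is that of the primal game $\Gamma_{n,\lambda}(p,q)$. This dual game has a value $w^{1}_{n,\lambda}(\mu,q)$ (by the same minimax argument as for the primal game, cf.\ \cite{sion1958general}). Expanding its upper (or lower) value by conditioning on player~1's opening move $p$, and using that the continuation is a zero-sum game with value $v_{n,\lambda}(p,q)$, gives $w^{1}_{n,\lambda}(\mu,q)=\sup_{p}\big[p^{T}\mu+v_{n,\lambda}(p,q)\big]$, and the supremum is attained because $v_{n,\lambda}(\cdot,q)$ is continuous on the compact set $\Delta(\mathcal{K})$; this is (\ref{eqn 5}). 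Identity (\ref{eqn 6}) is obtained the same way from the type~2 dual game, with player~2 choosing $q$, bonus $-q^{T}\nu$, and $\min$ in place of $\max$.

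\emph{Identities (\ref{eqn 3}) and (\ref{eqn 4}).} Fix $q$ and set $g(p):=v_{n,\lambda}(p,q)$, which is concave and, being the value of a game with bounded stage payoffs, Lipschitz on $\Delta(\mathcal{K})$. By (\ref{eqn 5}), $w^{1}_{n,\lambda}(\cdot,q)$ is a supremum of affine functions, hence convex, and $w^{1}_{n,\lambda}(\mu,q)-p^{T}\mu\ge g(p)$ for every $\mu$, so $\min_{\mu}\big[w^{1}_{n,\lambda}(\mu,q)-p^{T}\mu\big]\ge v_{n,\lambda}(p,q)$. For the reverse inequality, a concave Lipschitz function on a polytope is superdifferentiable at every point, so there is $\mu^{*}$ with $g(p')\le g(p)-(\mu^{*})^{T}(p'-p)$ for all $p'\in\Delta(\mathcal{K})$, i.e.\ $(p')^{T}\mu^{*}+g(p')\le p^{T}\mu^{*}+g(p)$; hence $w^{1}_{n,\lambda}(\mu^{*},q)=p^{T}\mu^{*}+v_{n,\lambda}(p,q)$, so the minimum in (\ref{eqn 3}) is attained at this $\mu^{*}$ with value $v_{n,\lambda}(p,q)$. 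Identity (\ref{eqn 4}) is the mirror image: fix $p$, use convexity and Lipschitzness of $v_{n,\lambda}(p,\cdot)$, and (\ref{eqn 6}).

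\emph{Strategy correspondence and main obstacle.} Let $\mu^{*}$ attain the minimum in (\ref{eqn 3}); by the previous step $w^{1}_{n,\lambda}(\mu^{*},q)=p^{T}\mu^{*}+v_{n,\lambda}(p,q)$. Let $\tau^{*}$ be player~2's optimal strategy in $\Tilde{\Gamma}^{1}_{n,\lambda}(\mu^{*},q)$; since player~2's information is the same in the dual and primal games, $\tau^{*}$ is a legitimate strategy of $\Gamma_{n,\lambda}(p,q)$. Because $\tau^{*}$ caps player~1's payoff in the dual game at $w^{1}_{n,\lambda}(\mu^{*},q)$ against every strategy of player~1 --- in particular against ``choose $p$, then play $\sigma$'' --- we get $p^{T}\mu^{*}+\gamma_{n,\lambda}(p,q,\sigma,\tau^{*})\le w^{1}_{n,\lambda}(\mu^{*},q)=p^{T}\mu^{*}+v_{n,\lambda}(p,q)$, hence $\gamma_{n,\lambda}(p,q,\sigma,\tau^{*})\le v_{n,\lambda}(p,q)$ for all $\sigma$, so $\tau^{*}$ is optimal for player~2 in $\Gamma_{n,\lambda}(p,q)$. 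The statement about player~1's optimal strategy in $\Tilde{\Gamma}^{2}_{n,\lambda}(p,\nu^{*})$ follows symmetrically from (\ref{eqn 4}) and (\ref{eqn 6}). I expect the main obstacle to be the attainment of the extrema, i.e.\ the nonemptiness of the super/subdifferential of $v_{n,\lambda}$ at boundary points of the simplices, which is exactly what guarantees existence of the optimizers $\mu^{*}$ and $\nu^{*}$; a secondary point is making precise how the main paper's dual game, after its opening splitting move, reduces to the primal game, since Steps~1 and~3 both rest on that reduction.
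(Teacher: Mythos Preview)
Your proposal is correct, and for identities (\ref{eqn 5})--(\ref{eqn 6}) and for the strategy correspondence it matches the paper's argument almost verbatim. The genuine difference lies in how you obtain the reverse inequality in (\ref{eqn 3}) (and symmetrically (\ref{eqn 4})). You invoke abstract convex analysis: since $v_{n,\lambda}(\cdot,q)$ is concave and Lipschitz on the simplex, it has a supergradient $\mu^{*}$ at $p$, and this $\mu^{*}$ attains the minimum. The paper instead constructs the optimizer explicitly: it sets $\beta_{k}(\tau^{*})=\max_{\sigma}E_{\sigma,\tau^{*},q}\big(\sum_{t}\lambda^{t-1}G_{k_t,l_t}(a_t,b_t)\mid k_1=k\big)$ from player~2's primal optimal strategy $\tau^{*}$, verifies $p^{T}\beta(\tau^{*})=v_{n,\lambda}(p,q)$, and then shows directly that the dual game $\tilde\Gamma^{1}_{n,\lambda}(-\beta(\tau^{*}),q)$ has value at most $0$, forcing equality at $\mu^{*}=-\beta(\tau^{*})$. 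Your route is cleaner and more conceptual, and it dispatches the boundary-attainment concern you flag (Lipschitz concave functions on a polytope are superdifferentiable everywhere). The paper's route is more hands-on but buys something extra: the explicit formula $\mu^{*}=-\beta(\tau^{*})$, together with $w^{1}_{n,\lambda}(\mu^{*},q)=0$, which the paper records separately as Corollary~\ref{corollary: 3.2}. Your argument proves existence of $\mu^{*}$ but does not, by itself, identify it with $-\beta(\tau^{*})$; if you want that corollary you would need to add the paper's construction on top.
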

The proof of this proposition is in the supplementary material. 
\begin{corollary}
\label{corollary: 3.2}
The optimal solutions to equation (\ref{eqn 3}) and equation (\ref{eqn 4}) are
\begin{align}
\mu^*=&-\beta(\tau^*) \label{eq: mu star} \\
\nu^*=&-\alpha(\sigma^*) \label{eq: nu star}
\end{align}
respectively. Here, $\sigma^*$ and $\tau^*$ are player 1 and 2's security strategies in the primal game. $\alpha(\sigma)= (\alpha_l(\sigma))_{l\in L}$ and $\beta(\tau)=(\beta_k(\tau))_{k\in K}$ are defined as follows. 
\begin{align*}
\alpha_{l}(\sigma) =& \min_{\tau} E_{p,q,\sigma,\tau}\Big( \sum_{t=1}^{N} \lambda^{t-1}G_{k_t,l_t}(a_t,b_t)| l_1=l \Big)\\
\beta_{k}(\tau) =& \max_{\sigma} E_{p,q,\sigma,\tau}\Big( \sum_{t=1}^{N} \lambda^{t-1}G_{k_t,l_t}(a_t,b_t)| k_1=k \Big)
\end{align*}
Moreover, $w^{1}_{n,\lambda}(\mu^*,q)=w^{2}_{n,\lambda}(p,\nu^*)=0$, where $\mu^*$ and $\nu^*$ are given in equation (\ref{eq: mu star}) and (\ref{eq: nu star}).
%For the specific $ \mu=-\beta(\tau^*) $ the game value of type 1 dual game will be zero and for specific $ \tau=-\alpha(\sigma^*) $ the game value of type 2 dual game will be zero. Here, $-\beta(\tau^*)$ is a solution of equation (\ref{eqn 3}) and $-\alpha(\sigma^*)$ is a solution of equation (\ref{eqn 4}).
\end{corollary}
%While Proposition \ref{proposition 3.1} derives the game value of the primal game from the dual games, the following proposition provides the opposite direction.
%\begin{proposition}
%\label{Proposition 3.3}
%The values of type 1 and 2 dual games satisfy
%\begin{align}
%w^{1}_{n,\lambda}(\mu,q)=& \max_{p} p^{T}\mu +v_{n,\lambda}(p,q) \label{eqn 5}\\
%w^{2}_{n,\lambda}(p,\nu)= &\min_{q} q^{T}\nu +v_{n,\lambda}(p,q) \label{eqn 6}
%\end{align}
%\end{proposition}
%\begin{proof}
%For type 1 dual game,
%\begin{align*}
%w^{1}_{n,\lambda}(\mu,q)=& \max_{p} \max_{\sigma} \min_{\tau} \tilde{\gamma}^{1}(\mu,q,p,\sigma,\tau)\\
%=& \max_{p} \max_{\sigma} \min_{\tau} E_{p,q,\sigma,\tau}\Big(\mu(k_1)+ \sum\limits_{t=1}^{N}\lambda^{t-1} G_{k_t,l_t}(a_t,b_t)\Big)\\
%=& \max_{p} \max_{\sigma} \min_{\tau} E_{p}\Big(\mu(k_1)\Big)+E_{p,q,\sigma,\tau}\Big(\sum\limits_{t=1}^{N}\lambda^{t-1}G_{k_t,l_t}(a_t, b_t)\Big)\\
%=& \max_{p} \sum\limits_{k_1} \mu(k_1)Pr(k_1)+ \max_{\sigma} \min_{\tau} E_{p,q,\sigma,\tau}\Big(\sum\limits_{t=1}^{N} \lambda^{t-1} G_{k_t,l_t}(a_t,b_t)\Big)\\
%=& \max_{p} p^T \mu + \max_{\sigma} \min_{\tau}  \gamma_{n,\lambda}(p,q,\sigma,\tau)\\
%=& \max_{p} p^T \mu + v_{n,\lambda}(p,q)
%\end{align*}
%By similar technique we can prove equation (\ref{eqn 6}) for type 2 dual game. \qed
%\end{proof}
Based on the recursive formula of primal game and the relationship between the primal game and the two dual games, the recursive formula of the dual games can be derived as follows.
\begin{proposition}
\label{prop: recursive formula, dual games}
The value of type 1 dual game satisfies the following recursive formula:
\begin{align}
w^{1}_{n,\lambda}(\mu,q)=& \min_{Y} \min_{\beta_{a\in A, b\in B}} \max_{\Pi} \sum_{a,k} \Pi(a,k) \Big[ \mu(k) + \sum_{l, b} G_{k,l} (a,b)Y(b,l)q(l)+ \nonumber \\
& \lambda \sum_{b} \bar{y}_{q,Y}(b) \Big( w^{1}_{n-1,\lambda}(\beta_{a,b}, q^+_{q,Y}) - \sum_{k'}P_{a,b}(k,k')\beta_{a,b}(k') \Big)\Big] \label{dual_recursive_type_1} \\
\doteq & \min_{Y} \min_{\beta} \max_{\Pi} \mathfrak{S}^1_{\mu,q,Y,\beta,\Pi}(w^1_{n-1,\lambda}) \nonumber
\end{align}
where $\Pi(a,k)=Pr(a \cap k)$. The value of type 2 dual game satisfies the following recursive formula:
\begin{align}
w^{2}_{n,\lambda}(p,\nu)=& \max_{X} \max_{\alpha_{a\in A, b\in B}} \min_{\Psi} \sum_{b,l} \Psi(b,l) \Big[ \nu(l) + \sum_{k,  a} G_{k,l}(a,b) X(b,l) p(k)+ \nonumber \\
&  \lambda \sum_{a} \overline{x}_{p,X}(a)\Big( w^{2}_{n-1,\lambda}(p^{+}_{p,X}, \alpha_{a,b})- \sum_{l'} Q_{a,b}(l,l')\alpha_{a,b}(l')\Big)\Big] \label{dual_recursive_type_2} \\
\doteq & \max_{X} \max_{\alpha} \min_{\Psi} \mathfrak{S}^2_{p,\nu,X,\alpha,\Psi}(w^2_{n-1,\lambda}) \nonumber
\end{align}
where $\Psi(b,l)=Pr(b \cap l)$.
\end{proposition}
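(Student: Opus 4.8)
The plan is to derive the type~1 recursion (\ref{dual_recursive_type_1}) directly from Proposition~\ref{proposition 3.1} and Lemma~\ref{lemma: primal, recursive}, and then to read off the type~2 recursion (\ref{dual_recursive_type_2}) by the mirror-image argument that interchanges the two players ($p\leftrightarrow q$, $\mathcal{K}\leftrightarrow\mathcal{L}$, $a\leftrightarrow b$, $X\leftrightarrow Y$, $\mu\leftrightarrow\nu$, $\Pi\leftrightarrow\Psi$, $P\leftrightarrow Q$, $\beta\leftrightarrow\alpha$, $\max\leftrightarrow\min$) and uses (\ref{eqn 6}), (\ref{eqn 4}) in place of (\ref{eqn 5}), (\ref{eqn 3}). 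So it suffices to treat type~1.

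First I would unfold the value. By (\ref{eqn 5}), $w^{1}_{n,\lambda}(\mu,q)=\max_{p}[p^{T}\mu+v_{n,\lambda}(p,q)]$; substituting the recursion (\ref{eq: primal, recursive, player 1}) for $v_{n,\lambda}$ and noting that $p^{T}\mu$ does not involve $X$ or $Y$, we get $w^{1}_{n,\lambda}(\mu,q)=\max_{p}\max_{X}\min_{Y}[p^{T}\mu+\mathfrak{T}_{p,q,X,Y}(v_{n-1,\lambda})]$. Then I would fuse $\max_{p}\max_{X}$ into a single maximization over $\Pi\in\Delta(\mathcal{A}\times\mathcal{K})$ via $\Pi(a,k)=p(k)X(a,k)$: this correspondence is onto $\Delta(\mathcal{A}\times\mathcal{K})$ and the objective depends on $(p,X)$ only through $\Pi$. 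Using $p(k)=\sum_{a}\Pi(a,k)$, $\bar{x}_{p,X}(a)=\sum_{k}\Pi(a,k)$, $\bar{x}_{p,X}(a)\,p^{+}_{p,X}(k')=\sum_{k}\Pi(a,k)P_{a,b}(k,k')$, $p^{T}\mu=\sum_{a,k}\Pi(a,k)\mu(k)$, $\bar{G}(p,q,X,Y)=\sum_{a,k}\Pi(a,k)\sum_{l,b}G_{k,l}(a,b)Y(b,l)q(l)$, and the positive homogeneity $v_{n-1,\lambda}(\alpha p,q)=\alpha v_{n-1,\lambda}(p,q)$ to absorb the weight $\bar{x}_{p,X}(a)$ into the continuation value, the bracket becomes a function $H(\Pi,Y)$ that is linear in $\Pi$ apart from the single term $\lambda\sum_{a,b}\bar{y}_{q,Y}(b)\,v_{n-1,\lambda}(\hat{p}_{a,b},q^{+}_{q,Y})$, where $\hat{p}_{a,b}$ denotes the unnormalized posterior with entries $\sum_{k}\Pi(a,k)P_{a,b}(k,k')$. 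Thus $w^{1}_{n,\lambda}(\mu,q)=\max_{\Pi}\min_{Y}H(\Pi,Y)$.

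The core of the argument is two applications of Sion's theorem \cite{sion1958general}. For the first, $H(\Pi,Y)$ is concave in $\Pi$ on the compact convex set $\Delta(\mathcal{A}\times\mathcal{K})$ — it is linear plus a nonnegatively weighted sum of compositions of the linear map $\Pi\mapsto\hat{p}_{a,b}$ with $v_{n-1,\lambda}(\cdot,q^{+}_{q,Y})$, which is concave because $v_{n-1,\lambda}$ is concave in its first argument — and it is convex in $Y$ on $\Delta(\mathcal{B})^{\mathcal{L}}$, because $\hat{p}_{a,b}$ does not depend on the strategy $Y$, so each summand $\bar{y}_{q,Y}(b)\,v_{n-1,\lambda}(\hat{p}_{a,b},q^{+}_{q,Y})$ is the perspective of the convex map $q\mapsto v_{n-1,\lambda}(\hat{p}_{a,b},q)$ (convexity of $v_{n-1,\lambda}$ in its second argument being the concavity--convexity result proved at the start of this section) evaluated at arguments that are linear in $Y$. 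Hence $w^{1}_{n,\lambda}(\mu,q)=\min_{Y}\max_{\Pi}H(\Pi,Y)$. Next I would replace $v_{n-1,\lambda}(p^{+},q^{+})$ in each term of $H$ by its dual expression (\ref{eqn 3}) with a private multiplier $\beta_{a,b}$, i.e.\ $v_{n-1,\lambda}(p^{+},q^{+})=\min_{\beta_{a,b}}(w^{1}_{n-1,\lambda}(\beta_{a,b},q^{+})-(p^{+})^{T}\beta_{a,b})$; the factor $\bar{x}_{p,X}(a)=\sum_{k}\Pi(a,k)$ converts $\bar{x}_{p,X}(a)(p^{+})^{T}\beta_{a,b}$ into $\sum_{k}\Pi(a,k)\sum_{k'}P_{a,b}(k,k')\beta_{a,b}(k')$, and since the multipliers for distinct $(a,b)$ are independent and carry the nonnegative weights $\lambda\,\bar{y}_{q,Y}(b)\sum_{k}\Pi(a,k)$, the minimum factors out of the sum, so $H(\Pi,Y)=\min_{\beta}\mathfrak{S}^{1}_{\mu,q,Y,\beta,\Pi}(w^{1}_{n-1,\lambda})$. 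For each fixed $Y$, $\mathfrak{S}^{1}$ is linear in $\Pi$ and convex in $\beta$ — it is a nonnegative combination of the maps $\beta_{a,b}\mapsto w^{1}_{n-1,\lambda}(\beta_{a,b},q^{+}_{q,Y})=\max_{p}[p^{T}\beta_{a,b}+v_{n-1,\lambda}(p,q^{+}_{q,Y})]$, each a maximum of affine functions of $\beta_{a,b}$, plus $\beta$-linear terms — and the relevant multipliers can be taken in a fixed compact box, since they are bounded by a constant depending only on $n$, $\lambda$, $\max_{k,l,a,b}|G_{k,l}(a,b)|$ (by Corollary~\ref{corollary: 3.2} they may be chosen of the form $-\beta(\tau^{*})$, which lie in a fixed compact set). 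Sion's theorem then applies once more, giving $\max_{\Pi}\min_{\beta}\mathfrak{S}^{1}=\min_{\beta}\max_{\Pi}\mathfrak{S}^{1}$, and chaining the identities yields $w^{1}_{n,\lambda}(\mu,q)=\min_{Y}\min_{\beta}\max_{\Pi}\mathfrak{S}^{1}_{\mu,q,Y,\beta,\Pi}(w^{1}_{n-1,\lambda})$, which is (\ref{dual_recursive_type_1}).

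I expect the main obstacle to be exactly these two minimax exchanges: making the convexity of $H$ in $Y$ visible requires recognizing the continuation term as a perspective of $v_{n-1,\lambda}(\cdot,q)$, and the second exchange requires the a~priori boundedness of the multipliers $\beta_{a,b}$ so that one of the two feasible sets can be taken compact. Everything else — expanding $\bar{G}$, $p^{T}\mu$, $\bar{x}_{p,X}$, $p^{+}$ and $\bar{y}_{q,Y}$ in terms of $\Pi$, pushing the minima through the finite sums, and handling the degenerate cases $\bar{x}_{p,X}(a)=0$ or $\bar{y}_{q,Y}(b)=0$ by the conventions already used in Lemma~\ref{lemma: primal, recursive} — is routine bookkeeping, and the type~2 recursion (\ref{dual_recursive_type_2}) follows verbatim by the symmetry described above.
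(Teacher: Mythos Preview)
Your proposal is correct and follows essentially the same route as the paper: start from $w^{1}_{n,\lambda}(\mu,q)=\max_{p}[p^{T}\mu+v_{n,\lambda}(p,q)]$, substitute the primal recursion of Lemma~\ref{lemma: primal, recursive}, change variables to $\Pi(a,k)=p(k)X(a,k)$ using the positive homogeneity of $v_{n-1,\lambda}$, apply Sion's theorem to swap $\max_{\Pi}$ and $\min_{Y}$, replace $v_{n-1,\lambda}(p^{+},q^{+})$ by its dual expression (\ref{eqn 3}) with multipliers $\beta_{a,b}$, and apply Sion again using convexity of $w^{1}_{n-1,\lambda}(\cdot,q^{+})$. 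Your justifications for the two minimax exchanges (convexity of $H$ in $Y$ via the perspective construction, and the a~priori compactness of the $\beta_{a,b}$ range) are in fact more carefully argued than in the paper, which asserts linearity in $Y$ and does not discuss compactness; otherwise the arguments coincide.
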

Based on Proposition \ref{prop: recursive formula, dual games}, we see that besides the belief, there is another important variable, $\mu$ in type 1 dual game, and $\nu$ in type 2 dual game. We call them the vector payoffs on state $k$ and $l$, respectively, and define them in the following way. 
%From Proposition \ref{prop: recursive formula, dual games}, we see that besides the belief, there is another important variable, $\mu$ in type 1 dual game, and $\nu$ in type 2 dual game. We call them the vector payoffs on state $k$ and $l$, respectively, and define them in the following way.
\begin{definition}
\label{def: vector payoff}
Consider a type 1 dual game $\Tilde{\Gamma}^{1}_{N,\lambda}(\mu,q)$. Let $\mu_1=\mu$ be the initial vector payoff, and $\beta^*$ be the optimal solution to the following problem
\begin{align}
w^{1}_{n,\lambda}(\mu_t,q_t)=\min_{Y} \min_{\beta_{a \in A, b \in B}} \max_{\Pi} \mathfrak{S}^1_{\mu_t,q_t,Y,\beta,\Pi}(w^1_{n-1,\lambda}) \label{eq: optimal problem, dual 1}
\end{align}
where $n=N+1-t$ for stage $t$. The vector payoff over state $k$ at stage $t+1$ is defined as $\mu_{t+1}=\beta_{a,b}^*$. Similarly, consider a type $2$ dual game $\Tilde{\Gamma}^{2}_{N,\lambda}(p,\nu)$. Let $\nu_1=\nu$ be the initial vector payoff, and $\alpha^*$ be the optimal solution to the problem
\begin{align}
w^{2}_{n,\lambda}(p_t,\nu_t)=\max_{X} \max_{\alpha_{a \in A, b \in B}} \min_{\Psi} \mathfrak{S}^2_{p_t,\nu_t,X,\alpha,\Psi}(w^2_{n-1,\lambda})\label{eq: optimal problem, dual 2}
\end{align}
where $n=N+1-t$ for stage $t$. The vector payoff over state $l$ at stage $t+1$ is defined as $\nu_{t+1}=\alpha_{a,b}^*$.
\end{definition}
From Proposition \ref{prop: recursive formula, dual games} we can see, $(p_t,\nu_t)$ and $(\mu_t,q_t)$ are the sufficient statistics in type 2 and 1 dual games.
\begin{corollary}
\label{col: sufficient statistics, dual games}
Player $1$'s security strategy at stage $t$ in type $2$ dual game $\Tilde{\Gamma}^{2}_{N,\lambda}(p,\nu)$ only depends on $p_t$ and $\nu_t$, together with $t$ if $N$ is finite.
Player $2$'s security strategy at stage $t$ in type $1$ dual game $\Tilde{\Gamma}^{1}_{N,\lambda}(\mu,q)$ only depends on $\mu_t$ and $q_t$, together with $t$ if $N$ is finite.
\end{corollary}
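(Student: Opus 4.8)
The plan is to read the sufficient-statistic property directly off the recursive formula established in Proposition~\ref{prop: recursive formula, dual games}, in exactly the way the ``moreover'' part of Lemma~\ref{lemma: primal, recursive} follows from the primal recursion~(\ref{eq: primal, recursive, player 1}). I will treat the type~1 dual game $\tilde{\Gamma}^1_{N,\lambda}(\mu,q)$ and player~2's security strategy in detail; the type~2 statement for player~1 then follows by the symmetric argument using~(\ref{dual_recursive_type_2}). First I would fix the horizon bookkeeping: at stage $t$ there are $n=N+1-t$ stages to go, so it suffices to prove that the stage-$1$ optimizer of $\tilde{\Gamma}^1_{n,\lambda}(\mu,q)$ is a function of $(\mu,q,n)$ only, and then observe that for finite $N$ dependence on $n$ is the same as dependence on $t$, whereas for $N=\infty$ there is no such dependence.

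The main step is to extract the stage-$1$ optimizers from~(\ref{dual_recursive_type_1}). Writing $w^1_{n,\lambda}(\mu,q)=\min_{Y}\min_{\beta}\max_{\Pi}\mathfrak{S}^1_{\mu,q,Y,\beta,\Pi}(w^1_{n-1,\lambda})$, all quantities appearing inside $\mathfrak{S}^1$ — the stage payoff $\sum_{l,b}G_{k,l}(a,b)Y(b,l)q(l)$, the marginal $\bar{y}_{q,Y}(b)$, the posterior $q^{+}_{q,Y}$, and the penalty $\sum_{k'}P_{a,b}(k,k')\beta_{a,b}(k')$ — are built only from $\mu$, $q$, $Y$, $\beta$, $\Pi$ and the fixed game data $(G,P,Q)$. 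Using the convexity of $w^1_{n-1,\lambda}(\cdot,q^{+})$ established in Proposition~\ref{prop: recursive formula, dual games} (so that the inner minimum over $\beta$ is attained) together with Sion's theorem for the $\max_\Pi$/$\min_Y$ pair, any optimal triple $(Y^*,\beta^*,\Pi^*)$ can be selected as a function of $(\mu,q,n)$ alone. In particular player~2's mixed action $Y^*$ at stage~$1$ depends only on $(\mu,q,n)$, hence its stage-$t$ action depends only on $(\mu_t,q_t,n)$.

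To close the induction on $t$ I would identify the state transition. Once the stage-$1$ optimizers $(Y^*,\beta^*)$ are realized together with the actions $(a,b)$, the recursion exhibits the continuation as the type~1 dual game $\tilde{\Gamma}^1_{n-1,\lambda}\big(\beta^*_{a,b},\,q^{+}_{q,Y^*}\big)$: this is the content of the inner $\min_\beta$, where the Fenchel relation~(\ref{eqn 3}) gives $\min_{\beta}\big(w^1_{n-1,\lambda}(\beta,q^{+})-(p^{+})^{T}\beta\big)=v_{n-1,\lambda}(p^{+},q^{+})$. Therefore the parameters of the dual game at stage $t+1$ are $\mu_{t+1}=\beta^*_{a_t,b_t}$ and $q_{t+1}=q^{+}_{q_t,Y^*_t}$, each determined by $(\mu_t,q_t)$ and the history only through the realized $(a_t,b_t)$. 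Combining this with the previous paragraph and inducting on $t$ shows that $(\mu_t,q_t)$ is a sufficient statistic for player~2's security strategy at every stage, with additional dependence on $t$ entering only through $n=N+1-t$ when $N$ is finite. The type~2 assertion for player~1 is obtained by the same argument from~(\ref{dual_recursive_type_2}), with $(X,\alpha,\Psi,p,\nu)$ playing the roles of $(Y,\beta,\Pi,q,\mu)$.

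The step I expect to be the main obstacle is justifying that this decomposition is an \emph{equality}, i.e.\ that a dynamic-programming principle holds for the dual game so that ``$\beta^*_{a,b}$ becomes the new $\mu$'' is legitimate rather than merely a one-sided bound. Unlike $q_t$, which is a genuine Bayesian posterior, the running vector $\mu_t$ is the minimizer's choice variable — a vector of continuation promises — so one must check that the optimal $\beta$ in~(\ref{dual_recursive_type_1}) is attained and is consistent with optimal play in the continuation dual game. Concretely this means re-running, at the level of the dual game, the one-shot guarantee argument of Lemma~\ref{lemma: primal, recursive} (the construction of $\sigma$ that plays $X^*$ at stage~$1$ and then the optimal $\sigma_a$, and its dual analogue for player~2's promise-keeping), so that the security value propagates exactly along the play. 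Once that consistency is in hand, the induction closes and the corollary follows.
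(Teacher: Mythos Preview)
Your proposal is correct and follows essentially the same route as the paper: both arguments read the stage-$1$ optimizer $Y^*$ off the dual recursion~(\ref{dual_recursive_type_1}), identify the continuation parameters as $(\mu_{t+1},q_{t+1})=(\beta^*_{a_t,b_t},q^+_{q_t,Y^*})$, and then verify that the resulting strategy attains $w^1_{n,\lambda}(\mu,q)$ by a one-shot guarantee computation in the style of Lemma~\ref{lemma: primal, recursive}. The paper carries out that last step by an explicit payoff calculation (splitting the total payoff into the stage-$1$ term and the stage-$2{:}N$ term and showing their sum equals the right-hand side of~(\ref{dual_recursive_type_1})), which is precisely the ``equality rather than one-sided bound'' obstacle you flagged.
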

Equation (\ref{eq: optimal problem, dual 1}) and (\ref{eq: q+}) imply that $(\mu_{t+1},q_{t+1})$ only depends on player 2's strategy and state information and hence is fully accessible to player $2$. Equation (\ref{eq: optimal problem, dual 2}) and (\ref{eq: p+}) imply that $(p_{t+1}, \nu_{t+1})$ only depends on player 1's strategy and state information and hence is fully accessible to player $1$.  
%From equation (\ref{eq: optimal problem, dual 1}) and (\ref{eq: q+}), we see that $(\mu_{t+1},q_{t+1})$ only depends on $(\mu_t,q_t)$ and player $2$'s strategy and hence is fully accessible to player $2$. Equation (\ref{eq: optimal problem, dual 2}) and (\ref{eq: p+}) imply that $(p_{t+1}, \nu_{t+1})$ only depends on $(p_t,\nu_t)$ and player $1$'s strategy and hence is fully accessible to player $1$.
%From Proposition \ref{proposition 3.1}, Corollary \ref{corollary: 3.2}, and Corollary \ref{col: sufficient statistics, dual games}, we build a path of looking for the fully accessible sufficient statistics, and the security strategies of both players in the primal game. Let's assume $\mu^*$ and $\nu^*$ defined as in equation (\ref{eq: mu star}) and (\ref{eq: nu star}) are found. Player $1$ figures out its security strategy in type $2$ dual game $\Tilde{\Gamma}^{2}_{N,\lambda}(p,\nu^*)$. Notice that according to Corollary \ref{col: sufficient statistics, dual games}, player $1$'s security strategy only depends on $p_t$ and $\nu_t$ which are fully accessible by player $1$. This security strategy in type $2$ dual game is then used in the primal game. According to Proposition \ref{proposition 3.1}, this strategy is also the security strategy in the primal game. Player 2 can follow the similar steps to derive a security strategy in the primal game.
%The problem is how to compute the $\mu^*$ and $\nu^*$ defined as in equation (\ref{eq: mu star}) and (\ref{eq: nu star}). The next section will answer this question.
\section{Problem Statement}
\label{Problem Statement}
The literature has provided rich descriptive results about the existence and properties of the game values and the security strategies in both primal and dual games \cite{2226507620060801, gensbittel2015value, Rosenberg1998, von1996efficient, sorin2003stochastic}, and hence builds a solid foundation for us to explore prescriptive results about how to efficiently store and compute the strategies. As pioneer work in computing security strategies in Bayesian games, \cite{von1996efficient} provides LPs to solve two-player zero-sum Bayesian games. With the sequence form, \cite{von1996efficient} reduces the computational complexity to be linear with respect to the number of the leaf nodes in a game tree. Although this is a great decrease in computational complexity compared to the extensive form, since the number of the leaf nodes grows exponentially with respect to the time horizon, the size of the LPs also grows accordingly. This fact limits us to games with small time horizons.

This paper focuses on computing strategies in a finite horizon Bayesian game with a large time horizon. While \cite{von1996efficient} provides an LP formulation to compute the security strategies of the primal stochastic Bayesian game, the player has to remember all the history actions of both players to figure out which strategy to use, and, as mentioned before, the size of the LP increases exponentially with respect to the number of stages. We are interested in a strategy which is based on a fully accessible sufficient statistic and has constant computational complexity which does not grow with respect to the full time horizon. So, the player only needs to remember a sufficient statistic and a fixed sized action history and updates its strategy periodically.
%One way of getting the optimal strategies of the players is to compute LP formulation of primal game. But with the increasing number of stages the size of LP increases exponentially so as the computational complexity. To avoid this problem, we propose an algorithm where total number of stages is divided in to some small fixed sized windows and then find the suboptimal strategies of the player for each window one by one. We know from section \ref{preliminary results} the sufficient statistics of primal game is not fully accessible by the players. Though the first window can be solved using primal game as at initial stage $p_1$ and $q_1$ is common knowledge of the players but for the rest of the windows they don't have the whole sufficient statistics. That is why, we are using dual game.

Based on the previous results in the literature which suggest that only dual games have fully accessible sufficient statistics and the security strategies in the dual games can serve as the security strategies in the primal game, a promising method is to divide the total game into multiple \emph{windows}, solve it window by window and update the strategy periodically. We refer this method as \emph{window by window method}.

There are several challenges in this solution. First, we don't know the special initial vector payoffs $\mu^*$ and $\nu^*$ of the dual games for which the security strategies will also be the security strategies in the primal game. Second, we don't know how to efficiently compute the sufficient statistics in the dual games for the next window, though we know they shall be the optimal solution in equation (\ref{eq: optimal problem, dual 1}) and (\ref{eq: optimal problem, dual 2}). Third, as the window-by-window method is a suboptimal strategy, we need to analyze the performance of this algorithm and check how close it is to the optimal game value.
\section{Sufficient statistic based strategies and the LP formulations for short time horizon}
\label{LPs for sufficient statistics based strategies}
%Section \ref{sec: primal game} and \ref{sec: dual game} provide descriptive results about the game value and the security strategies in primal games and dual games. This section provides prescriptive results for both the primal and dual games. In \cite{von1996efficient}, LP formulation was provided to compute the optimal strategies in primal games but they did not introduce it for dual games. To be more specific, this section gives the LP formulations to compute the game values and security strategies of both players in primal and dual games. Moreover, LP formulation to compute the optimal solution to problem (\ref{eqn 3}) and (\ref{eqn 4}) are proposed.
%\subsection{A window-by-window sufficient statistic based strategy}
Assume the computational capacity only allow us to compute security strategies in games with $n<N$ stages in a timely manner. For every $n$ stages, we can compute and apply the security strategies in the $n$-stage game periodically. The problem is that, except the first window, no player has full access to $(p_t,q_t)$ in the other windows and cannot compute the security strategy in the primal game.

To solve this problem, we propose a window-by-window dual game based strategy as follows. 
\begin{enumerate}
  \item The first $n$-stage window.
    \begin{enumerate}
      \item At the beginning, compute the initial vector payoffs in both $n$-stage dual games defined in equation (\ref{eq: mu star}) and (\ref{eq: nu star}) and the security strategies in the $n$-stage primal game.
      \item In stage $t=1,\ldots,n$ inside this window, apply the security strategy at stage $t$ and update the believes (equation (\ref{eq: p+}-\ref{eq: q+})) and the vector payoffs (Definition \ref{def: vector payoff}).
    \end{enumerate} 
  \item The $2^{nd}$, $3^{rd}$, $\ldots$, $\lfloor \frac{N}{n} \rfloor^{th}$ $n$-stage window.
    \begin{enumerate}
      \item At the beginning, compute the security strategies in the $n$-stage dual games based on the updated believes and vector payoffs.
      \item In stage $t=1,\ldots,n$ inside the window, apply the security strategy at stage $t$ and update the believes (equation (\ref{eq: p+}-\ref{eq: q+})) and the vector payoffs (Definition \ref{def: vector payoff}).
    \end{enumerate}
  \item If $N\ (mod\ n) \neq 0$ then the last window size $m=N-(\lfloor \frac{N}{n} \rfloor) n$ stages.
    \begin{enumerate}
      \item At the beginning, compute the security strategies in the $m$-stage dual games based on the updated believes and vector payoffs.
      \item In stage $t=1,\ldots,m$ inside the window, apply the security strategy at stage $t$.
    \end{enumerate}
\end{enumerate}
The rest of this section will provide LPs to compute the initial vector payoffs, the updated vector payoffs, and the security strategies. To conclude this section, a detailed algorithm is provided.
\subsection{LPs to compute the initial vector payoffs in dual games}
First of all, we would like to introduce two realization plans and two weighted vector payoffs that will be used to compute the initial vector payoffs in dual games. The idea is introduced in \cite{von1996efficient} and the realization plan of player 1 is a function $\mathcal{R_{\mathcal{I}}}: \sigma \rightarrow \mathbb{R}$. The player 1's realization plan $R_{\mathcal{I}_t}(a_t)$ and player 2's realization plan $S_{\mathcal{J}_t}(b_t)$ are defined as follows.
\begin{align*}
R_{\mathcal{I}_t}(a_t)=&p(k_1)\prod_{s=1}^{t-1} P_{a_s,b_s}(k_s,k_{s+1}) \prod_{s=1}^{t} \sigma_s^{a_s}(\mathcal{I}_s)\\
S_{\mathcal{J}_t}(b_t)=&q(l_1)\prod_{s=1}^{t-1} Q_{a_s,b_s}(l_s,l_{s+1}) \prod_{s=1}^{t} \tau_s^{b_s}(\mathcal{J}_s)
\end{align*}
with $R_{\mathcal{I}_0}(a_0)=p(k_1)$, $P_{a_0,b_0}(k_0,k_1)=1$, $S_{\mathcal{J}_0}(b_0)=q(l_1)$ and $Q_{a_0,b_0}(l_0,l_1)=1$. It is straight forward to show
\begin{align}
R_{\mathcal{I}_t}(a_t)=& P_{a_{t-1},b_{t-1}}(k_{t-1},k_t) \sigma_{t}^{a_t}(\mathcal{I}_t)R_{\mathcal{I}_{t-1}}(a_{t-1}) \label{eqn 7}\\
S_{\mathcal{J}_t}(b_t)=& Q_{a_{t-1},b_{t-1}}(l_{t-1},l_t) \tau_{t}^{b_t}(\mathcal{J}_t)S_{\mathcal{J}_{t-1}}(b_{t-1}) \label{eqn 8}
\end{align}
The weighted payoff $U_{\mathcal{J}_t}(\sigma,\tau)$ and $Z_{\mathcal{I}_t}(\sigma,\tau)$ of player 1 and 2 are defined as follows.
\begin{align*}
U_{\mathcal{J}_{t}}(\sigma,\tau)=& \sum_{k_1,...k_t} R_{\mathcal{I}_{t-1}}(a_{t-1})P_{a_{t-1},b_{t-1}}(k_{t-1},k_{t}) E\Big(\sum_{s=t}^{N}
\lambda^{s-1} G_{k_s,l_s}(a_s,b_s)|k_1,...k_t,\mathcal{J}_t \Big)\\
Z_{\mathcal{I}_{t}}(\sigma,\tau)=& \sum_{l_1,...l_t} S_{\mathcal{J}_{t-1}}(b_{t-1})Q_{a_{t-1},b_{t-1}}(l_{t-1},l_{t}) E\Big(\sum_{s=t}^{N}
\lambda^{s-1} G_{k_s,l_s}(a_s,b_s)|l_1,...l_t,\mathcal{I}_t \Big)
\end{align*}
Based on the definition of $U$ and $Z$, we see that $\alpha_l(\sigma^*)$ and $\beta_k(\tau^*)$ are related to $U$ and $Z$ in the following way.
\begin{align}
\alpha_{l}(\sigma^{\ast})=& \min_{\tau} U_{\mathcal{J}_1}(\sigma^{\ast},\tau); \hbox{where $\mathcal{J}_1=\{l\}$} \label{eq: alpha star}\\
\beta_{k}(\tau^{\ast})=& \max_{\sigma} Z_{\mathcal{I}_1}(\sigma,\tau^{\ast}); \hbox{where $\mathcal{I}_1=\{k\}$} \label{eq: beta star}
\end{align}
To build the LP formulation to compute $\alpha_{l}(\sigma^{\ast})$ and $\beta_{k}(\tau^{\ast})$ and hence $\mu^*$ and $\nu^*$ as in Corollary \ref{corollary: 3.2}, we first compute recursive formulas for $\min_{\tau} U_{\mathcal{J}_t}(\sigma,\tau)$ and  $\max_{\sigma} Z_{\mathcal{I}_t}(\sigma,\tau)$ in Corollary \ref{col: recursive formula, U star, Z star} based on the recursive formulas for $ U_{\mathcal{J}_t}(\sigma,\tau) $ and $Z_{\mathcal{I}_t}(\sigma,\tau)$ in lemma \ref{Lemma 4.2} which is in the Appendix. Then set up LP's for $\min_{\tau} U_{\mathcal{J}_t}(\sigma,\tau)$ and $\max_{\sigma} Z_{\mathcal{I}_t}(\sigma,\tau)$ in Lemma \ref{lemma: LP, U star, Z star} and finally compute $\alpha_{l}(\sigma^{\ast})$ and $\beta_{k}(\tau^{\ast})$ in Theorem \ref{lp of primal game}.
%\begin{remark}
%\begin{align*}
%U_{\mathcal{J}_1}(\sigma,\tau) =& \sum_{k_1} Pr(k_1).1.E
%\Big(\sum_{s=1}^{N} \lambda^{s-1}G_{k_s,l_s}(a_s,b_s)|k_1,l_1\Big)\\
%=&E\Big(\sum_{s=1}^{N}\lambda^{s-1} G_{k_s,l_s}(a_s,b_s)|l_1\Big)\\
%\therefore \alpha_{l}(\sigma^{\ast})=& \min_{\tau} U_{\mathcal{J}_1}(\sigma^{\ast},\tau); \ \ Where, \mathcal{J}_1=l\\
%Z_{\mathcal{I}_1}(\sigma,\tau) =& \sum_{l_1} Pr(l_1).1.E
%\Big(\sum_{s=1}^{N} \lambda^{s-1}G_{k_s,l_s}(a_s,b_s)|k_1,l_1\Big)\\
%=&E\Big(\sum_{s=1}^{N}\lambda^{s-1} G_{k_s,l_s}(a_s,b_s)|k_1\Big)\\
%\therefore \beta_{k}(\tau^{\ast})=& \max_{\sigma} Z_{\mathcal{I}_1}(\sigma,\tau^{\ast}); \ \ Where, \mathcal{I}_1=k		
%\end{align*}	
%\end{remark}
Define player $1$ and $2$'s optimal weighted payoffs $U^{*}_{\mathcal{J}_t}(\sigma)$ and $Z^{*}_{\mathcal{I}_t}(\tau)$ as
\begin{align*}
U^{*}_{\mathcal{J}_t}(\sigma) =& \min_{\tau_{t:N}} U_{\mathcal{J}_t}(\sigma,\tau)\\
Z^{*}_{\mathcal{I}_t}( \tau) =& \max_{\sigma_{t:N}} Z_{\mathcal{I}_t}(\sigma,\tau)
\end{align*}
where $\tau_{t:N}$ and $\sigma_{t:N}$ is player 2 and 1's behavior strategy from stage $t$ to $N$, respectively. It is straight forward to show the following corollary based on Lemma \ref{Lemma 4.2} in Appendix.
\begin{corollary}
\label{col: recursive formula, U star, Z star}
The optimal weighted payoffs $ U^{*}_{\mathcal{J}_t}(\sigma) $ and  $ Z^{*}_{\mathcal{I}_t}(\tau) $ satisfy the following recursive formulas.
\begin{align*}
U^{*}_{\mathcal{J}_t}(\sigma) =& \min_{\tau_{t}} \sum_{b_t} \Big(\sum_{a_t} \sum_{k_1,...k_t} R_{\mathcal{I}_t}(a_t)\lambda^{t-1}
 G_{k_t,l_t}(a_t,b_t) + \sum_{a_t} \sum_{l_{t+1}} Q_{a_t,b_t}(l_t,l_{t+1})\\ & U_{\mathcal{J}_{t+1}}^{*}(\sigma)\Big) \tau_{t}^{b_t}(\mathcal{J}_t); \ \ with\ U^{*}_{\mathcal{J}_{N+1}}(\sigma)=0\\
Z^{*}_{\mathcal{I}_t}(\tau) =& \max_{\sigma_{t}} \sum_{a_t} \Big(\sum_{b_t} \sum_{l_1,...l_t} S_{\mathcal{J}_t}(b_t)\lambda^{t-1} G_{k_t,l_t}(a_t,b_t)
+ \sum_{b_t} \sum_{k_{t+1}} P_{a_t,b_t}(k_t,k_{t+1})\\
& Z_{\mathcal{I}_{t+1}}^{*}(\tau)\Big) \sigma_{t}^{a_t}(\mathcal{I}_t);\ \ with Z^{*}_{\mathcal{I}_{N+1}}(\tau)=0
\end{align*}
\end{corollary}
The optimal weighted payoff $ U^{*}_{\mathcal{J}_t}(\sigma) $ and  $ Z^{*}_{\mathcal{I}_t}(\tau) $ can be computed by LP given in Lemma \ref{lemma: LP, U star, Z star}.
\begin{lemma}
\label{lemma: LP, U star, Z star}
For any $t=1,\ldots,N$, $U_{\mathcal{J}_t}^{\ast}(\sigma)$ and $ Z^{*}_{\mathcal{I}_t}(\tau) $ can be computed by the following LPs.
\begin{align*}
U_{\mathcal{J}_t}^{\ast}(\sigma) &= \max_{U} U_{\mathcal{J}_t}\\
s.t.
& \sum_{k_1...k_s} \sum_{a_s} R_{\mathcal{I}_s}(a_s)\lambda^{s-1} G_{k_s,l_s}(a_s,b_s) + \sum_{l_{s+1}} \sum_{a_s}Q_{a_s,b_s}
(l_s,l_{s+1})U_{\mathcal{J}_{s+1}} \geq U_{\mathcal{J}_s};\\
& \forall s=t,...N; \forall b_s; \forall \mathcal{J}_t \subset \mathcal{J}_s; U_{\mathcal{J}_{N+1}}=0, \forall \mathcal{J}_{N+1}\\
Z_{\mathcal{I}_t}^{*}(\tau) &= \min_{Z} Z_{\mathcal{I}_t}\\
s.t.
& \sum_{l_1...l_s} \sum_{b_s} S_{\mathcal{J}_s}(b_s)\lambda^{s-1} G_{k_s,l_s}(a_s,b_s) + \sum_{k_{s+1}} \sum_{b_s}P_{a_s,b_s}(k_s,k_{s+1}) Z_{\mathcal{I}_{s+1}} \leq Z_{\mathcal{I}_s};\\
& \forall s=t...N; \forall a_s; \forall \mathcal{I}_t \subset \mathcal{I}_s; Z_{\mathcal{I}_{N+1}}=0, \forall \mathcal{I}_{N+1}
\end{align*}
\end{lemma}
Based on the LP in Lemma \ref{lemma: LP, U star, Z star}, we further develop the LP's to compute $U_{\mathcal{J}_t}^{\ast}(\sigma^*)$ and $Z_{\mathcal{I}_t}^{*}(\tau^*)$ by considering the game value and the security strategies of the primal game in Theorem \ref{lp of primal game}.
\begin{theorem}
\label{lp of primal game}
Consider a primal game $\Gamma_{n,\lambda}(p,q)$. Its game value satisfies
\begin{align}
v_{n,\lambda}(p,q)&= \max_{R} \max_{U} \sum_{l}q(l)  U_{\mathcal{J}_1}; \quad where\ \mathcal{J}_{1}=\{l\} \label{eqn new 13}\\
s.t.\ &\sum_{k_1...k_t} \sum_{a_t} R_{\mathcal{I}_t}(a_t) \lambda^{t-1} G_{k_t,l_t}(a_t,b_t)+ \sum_{l_{t+1}} \sum_{a_t}Q_{a_t,b_t}(l_t,l_{t+1})U_{\mathcal{J}_{t+1}} \geq U_{\mathcal{J}_t}; \nonumber  \\
& \forall t=1,...n, \forall b_t, \forall \mathcal{J}_{t+1}\supset \mathcal{J}_t \label{eq: constraint 1, primal}\\
&\sum_{a_t} R_{\mathcal{I}_t}(a_t) = P_{a_{t-1},b_{t-1}}(k_{t-1},k_t)R_{\mathcal{I}_{t-1}}(a_{t-1});\forall t=1,...n, \forall \mathcal{I}_{t} \label{eq: constraint 2, primal}\\
&  U_{\mathcal{J}_{n+1}}=0; \quad \forall \mathcal{J}_{n+1}\label{eq: constraint 3, primal}\\
& R_{\mathcal{I}_t}(a_t) \geq 0;   \quad \forall \mathcal{I}_{t} \label{eq: constraint 4, primal}
\end{align}
The optimal strategy of player 1 is
\begin{align}
\sigma_{t}^{*a_t}(\mathcal{I}_t) =&\frac{ R^*_{\mathcal{I}_{t}}(a_{t})}{P_{a_{t-1},b_{t-1}}(k_{t-1},k_t)R^*_{\mathcal{I}_{t-1}}(a_{t-1})} \label{eqn new 14}
\end{align}
and the initial vector payoff of type 2 dual game defined in equation (\ref{eq: nu star}) is 
\begin{align}
\nu^*_l=&-\alpha_{l}(\sigma^{\ast}) =-U_{\mathcal{J}_1}^{\ast} \label{eqn new 15}
\end{align}
where $R^*$ and $U^*$ are the optimal solution to (\ref{eqn new 13}). Similarly, we have
\begin{align}
\begin{split}
v_{n,\lambda}(p,q)=& \min_{S} \min_{Z} \sum_{k}p(k)  Z_{\mathcal{I}_1};\ \ where\ \mathcal{I}_{1}=\{k\}\label{eqn 13}\\
s.t.
&\sum_{l_1...l_t} \sum_{b_t} S_{\mathcal{J}_t}(b_t)\lambda^{t-1} G_{k_t,l_t}(a_t,b_t)+ \sum_{k_{t+1}} \sum_{b_t} P_{a_t,b_t}(k_t,k_{t+1})Z_{\mathcal{I}_{t+1}} \leq Z_{\mathcal{I}_t};\\
& \forall t=1,...n\ ,\forall a_t, \forall\mathcal{I}_{t+1}\supset \mathcal{I}_t \\
&\sum_{b_t} S_{\mathcal{J}_t}(b_t) = Q_{a_{t-1},b_{t-1}}(l_{t-1},l_t) S_{\mathcal{J}_{t-1}}(b_{t-1}); \forall t=1:n, \forall \mathcal{J}_t \\
&  Z_{\mathcal{I}_{n+1}}=0; \quad \forall \mathcal{I}_{n+1} \\
& S_{\mathcal{J}_t}(b_t) \geq 0;\quad \forall \mathcal{J}_{t}
\end{split}
\end{align}
The optimal strategy of player 2 is
\begin{align}
\tau_{t}^{*b_t}(\mathcal{J}_t)=& \frac{S^*_{\mathcal{J}_{t}}(b_{t})}{Q_{a_{t-1},b_{t-1}}(l_{t-1},l_t)S^*_{\mathcal{J}_{t-1}}(b_{t-1})} \label{eqn new 16}
\end{align}
and the initial vector payoff of type 1 dual game defined in equation (\ref{eq: mu star}) is
\begin{align}
\mu^*_k=-\beta_{k}(\tau^{\ast}) =- Z_{\mathcal{I}_1}^{\ast} \label{eqn new 18}
\end{align}
where $Z^*$ and $S^*$ are the optimal solution to (\ref{eqn 13}).
\end{theorem}
\begin{proof}
According to the discussion in the second paragraph on page 248 in \cite{von1996efficient}, the realization probabilities can serve as strategic variables of a player. We have,
\begin{align*}
v_{n,\lambda}(p,q)=& \max_{\sigma} \min_{\tau} E\Big( \sum_{t=1}^{n}\lambda^{t-1} G_{k_t,l_t}(a_t,b_t)\Big)\\
=& \max_{R} \sum_{l} q(l)  \min_{\tau^{l}} E\Big( \sum_{t=1}^{n}\lambda^{t-1} G_{k_t,l_t}(a_t,b_t)|l \Big)\\
=& \max_{R} \sum_{l} q(l) U^{\ast}_{\mathcal{J}_1}(\sigma)\\
%=& \max_{R} \sum_{l}q(l) \max_{U} U_{\mathcal{J}_1}; \quad \\
=& \max_{R} \max_{U} \sum_{l}q(l)  U_{\mathcal{J}_1},where\ \mathcal{J}_{1}=\{l\}\\
& s.t.\ equation (\ref{eq: constraint 1, primal}-\ref{eq: constraint 4, primal})
%&\sum_{k_1...k_s} \sum_{a_s} R_{\mathcal{I}_s}(a_s) G_{k_s,l_s}(a_s,b_s)+ \sum_{l_{s+1}} \sum_{a_s} Q_{a_s,b_s}(l_s,l_{s+1})\\
%& U_{\mathcal{J}_{s+1}} \geq U_{\mathcal{J}_s};\quad \forall s=1,...N\ and\ \mathcal{J}_s\supset \mathcal{J}_t\\
%&\sum_{a_t} R_{\mathcal{I}_t}(a_t) = P_{a_{t-1},b_{t-1}}(k_{t-1},k_t) R_{\mathcal{I}_{t-1}}(a_{t-1}); \forall t=1,...N\\
%& R_{\mathcal{I}_t}(a_t) \geq 0; \quad \forall \mathcal{I}_{t}
\end{align*}
Let $R^*$ and $U^*$ be the optimal solution. Equation (\ref{eqn 7}) implies that the optimal strategy satisfies equation (\ref{eqn new 14}), and equation (\ref{eq: alpha star}) implies equation (\ref{eqn new 15}). Following the same steps, LP (\ref{eqn 13}), equation (\ref{eqn new 16}) and (\ref{eqn new 18}) can be proved. \qed
\end{proof}
The size of the linear program (\ref{eqn new 13}-\ref{eq: constraint 4, primal}) is polynomial with respect to the size of $|\mathcal{K}|, |\mathcal{L}|, |\mathcal{A}|, |\mathcal{B}|$ and exponential with respect to the length of the stage $n$. We first examine the size of variables. When $t=1$ the number of scalar variables for $R_{\mathcal{I}_t}$ is $|\mathcal{K}||\mathcal{A}|$. When $t=2$ the number of scalar variables for $R_{\mathcal{I}_t}$ is $|\mathcal{K}||\mathcal{A}||\mathcal{B}||\mathcal{K}||\mathcal{A}|$. So, when $t=n$ the number of scalar variables for $R_{\mathcal{I}_t}$ is $|\mathcal{K}|(|\mathcal{A}||\mathcal{B}||\mathcal{K}|)^{n-1}|\mathcal{A}|$. The number of scalar variables for $U_{\mathcal{J}_t}$ is $|\mathcal{L}|(|\mathcal{A}||\mathcal{B}||\mathcal{L}|)^{n-1}$. Therefore, the number of scalar variables has the order of $O(|A|^n|B|^n(|K|^n+|L|^n))$. For the constraints, constraint set (\ref{eq: constraint 1, primal}) has $|\mathcal{B}||\mathcal{L}|$ inequalities when $t=1$, $|\mathcal{B}||\mathcal{L}||\mathcal{A}||\mathcal{B}||\mathcal{L}|$ inequalities when $t=2$, and $|\mathcal{B}||\mathcal{L}|(|\mathcal{A}||\mathcal{B}||\mathcal{L}|)^{n-1}$ inequalities when $t=n$. In constraint set (\ref{eq: constraint 2, primal}), the number of equalities is $|\mathcal{K}|$ for $t=1$, $|\mathcal{K}||\mathcal{A}||\mathcal{B}||\mathcal{K}|$ for $t=2$, and $|\mathcal{K}|(|\mathcal{A}||\mathcal{B}||\mathcal{K}|)^{n-1}$ for $t=n$. There are $|\mathcal{L}|(|\mathcal{A}||\mathcal{B}||\mathcal{L}|)^{n}$ and $|\mathcal{K}|(|\mathcal{A}||\mathcal{B}||\mathcal{K}|)^{n}|\mathcal{A}|$ equations in constraint set (\ref{eq: constraint 3, primal}) and (\ref{eq: constraint 4, primal}), respectively. Therefore, the number of the constraints is in the order of $O(|A|^n|B|^n(|K|^n+|L|^n))$. The computational complexity of LP (\ref{eqn 13}) is the same as that of LP (\ref{eqn new 13}-\ref{eq: constraint 4, primal}).
\subsection{LPs to compute the security strategies of dual games}
Theorem \ref{lp of primal game} provides LPs to compute the initial vector payoffs and the optimal strategies of the players at the same time. Based on the relation between the game values of the primal game and the two dual games in Proposition \ref{proposition 3.1}, we can further develop LPs to compute the security strategy of player 1 in type 2 dual game, and the security strategy of player 2 in type 1 dual game.
\begin{proposition}
\label{prop: LP, dual game}
The game value of type 2 dual game $\tilde{\Gamma}^2_{n,\lambda}(p,\nu)$ satisfies
\begin{align}
w^{2}_{n,\lambda}(p,\nu)=& \max_{R} \max_{U_\mathcal{J}} \max_{U_0} U_0 \label{eqn new 19}\\
s.t.\ & \nu(l)+ U_{\mathcal{J}_1} \geq U_{0}, \quad \forall l,where \ \mathcal{J}_1=\{l\} \nonumber \\
& \sum_{k_1...k_t} \sum_{a_t} R_{\mathcal{I}_t}(a_t) \lambda^{t-1} G_{k_t,l_t}(a_t,b_t)+ \sum_{l_{t+1}} \sum_{a_t} Q_{a_t,b_t}(l_t,l_{t+1}) \nonumber
\end{align}
\begin{align}
& U_{\mathcal{J}_{t+1}} \geq U_{\mathcal{J}_t}; \quad \forall t=1,...n,\forall b_t,\forall \mathcal{J}_{t+1}\supset \mathcal{J}_t\nonumber\\
& \sum_{a_t} R_{\mathcal{I}_t}(a_t) = P_{a_{t-1},b_{t-1}}(k_{t-1},k_t)R_{\mathcal{I}_{t-1}}(a_{t-1});\quad \forall t=1,...n;  \forall \mathcal{I}_{t} \nonumber\\
& U_{\mathcal{J}_{N+1}}=0; \quad \forall \mathcal{J}_{N+1}\nonumber\\
& R_{\mathcal{I}_t}(a_t) \geq 0;  \quad \forall \mathcal{I}_{t} \nonumber
\end{align}
The optimal strategy of player 1 in type 2 dual game can be computed from equation (\ref{eqn new 14}) where $R^*$ and $U^*$ is the optimal solution to problem (\ref{eqn new 19}). Similarly, the game value of type 1 dual game $\tilde{\Gamma}^1_{n,\lambda}(\mu,q)$ satisfies
\begin{align}
w^{1}_{n,\lambda}(\mu,q)=& \min_{S} \min_{Z_{\mathcal{I}}} \min_{Z_0} Z_0  \label{eqn 14}\\
s.t.\
& \mu(k)+ Z_{\mathcal{I}_1} \leq Z_{0}, \quad \forall k, where \mathcal{I}_1=\{k\} \nonumber \\
&\sum_{l_1...l_t} \sum_{b_t} S_{\mathcal{J}_t}(b_t){\lambda^{t-1}} G_{k_t,l_t}(a_t,b_t)+ \sum_{k_{t+1}} \sum_{b_t}P_{a_t,b_t}(k_t,k_{t+1}) \nonumber \\
&Z_{\mathcal{I}_{t+1}} \leq Z_{\mathcal{I}_t}; \quad \forall t=1,...n; \forall a_{t}, \forall \mathcal{I}_{t+1}\supset \mathcal{I}_t \nonumber\\
&\sum_{b_t} S_{\mathcal{J}_t}(b_t) = Q_{a_{t-1},b_{t-1}}(l_{t-1},l_t) S_{\mathcal{J}_{t-1}}(b_{t-1}); \quad \forall t=1,...n;  \forall \mathcal{J}_{t}\nonumber \\
& Z_{\mathcal{I}_{n+1}}=0; \quad \forall \mathcal{I}_{n+1}\nonumber\\
& S_{\mathcal{J}_t}(b_t) \geq 0;\quad \forall \mathcal{J}_{t} \nonumber
\end{align}
The optimal strategy of player 2 in type 1 dual game can be computed from equation (\ref{eqn new 16}) where $S^*$ and $Z^*$ are the optimal solution to problem (\ref{eqn 14}).
\end{proposition}
\begin{proof}
By using theorem \ref{lp of primal game} we can write equation (\ref{eqn 6}) as,
\begin{align}
w^{2}_{n,\lambda}(p,\nu) =& \min_{q} q^{T}\nu + \max_{R} \max_{U} \sum_{l}q(l)  U_{\mathcal{J}_1} \nonumber \\
=& \min_{q} \max_{R} \max_{U} \sum_{l} q(l)\big(\nu(l)+U_{\mathcal{J}_1}\big) \label{eqn 34} \\
=& \min_{q} \max_{R} \max_{U} F_{1}(q,R,U) \nonumber\\
F_{1}(q,R,U) =& \sum_{l} q(l)\big(\nu(l)+U_{\mathcal{J}_1}\big) \nonumber\\
& s.t. \nonumber\\
& \sum_{k_1...k_t} \sum_{a_t} R_{\mathcal{I}_t}(a_t) G_{k_t,l_t}(a_t,b_t) + \sum_{l_{t+1}} \sum_{a_t} Q_{a_t,b_t}(l_t,l_{t+1}) U_{\mathcal{J}_{t+1}} \geq U_{\mathcal{J}_t}; \nonumber\\
& \forall t=1,...n; \forall b_t; \forall \mathcal{J}_{t+1} \subset \mathcal{J}_{t} \label{eqn 16} \\
& \sum_{a_t}R_{\mathcal{I}_t}(a_t) = P_{a_{t-1},b_{t-1}}(k_{t-1},k_{t}) R_{\mathcal{I}_{t-1}}(a_{t-1}); \forall t=1,...n; \forall \mathcal{I}_t \label{eqn 17} \\
& R_{\mathcal{I}_t}(a_t)\geq 0 \label{eqn 18} \\
&\sum_{l} q(l)=1  \label{eqn 19} \\
& q(l) \geq 0; \label{eqn 20} \quad \forall l
\end{align}
Given $U$, $R$ s.t. equation (\ref{eqn 34})-(\ref{eqn 18}) are linear constraints. Hence $R$ is in a compact convex set. Given $U$, $q$ s.t. equation (\ref{eqn 19}) and (\ref{eqn 20}) are linear constraints. Hence $q$ is in a compact convex set. Given $U$, $F_{1}(q,R,U)$ is linear w.r.t. $q$ and $R$. According to the minimax theorem we can write,
\begin{align*}
\max_{R} \min_{q} \max_{U} & \sum_{l} q(l)\big(\nu(l)+U_{\mathcal{J}_1}\big)\\
& s.t.\ equation\ (\ref{eqn 16})\ to\ (\ref{eqn 20})
\end{align*}
Given $R$, $U$ s.t. equation (\ref{eqn 34}), which is linear. Hence $U$ is a compact convex set. Given $R$, $q$ s.t. equation (\ref{eqn 18}) and (\ref{eqn 19}) which are linear. Hence $q$ is in a compact convex set. Given $R$, $F_1(q,R,U)$ is linear in $q$ and $U$. Then we can write,
\begin{align*}
\max_{R} \max_{U} \min_{q} & \sum_{l} q(l)\big(\nu(l)+U_{\mathcal{J}_1}\big)\\
& s.t.\ equation\ (\ref{eqn 16})\ to\ (\ref{eqn 20})
\end{align*}
A part of this LP, $\min_{q} \sum_{l} q(l)\big(\nu(l)+U_{\mathcal{J}_1}\big)$ s.t. $\sum_{l} q(l)=1$ and $q(l) \geq 0;\forall l$ can be written as the following dual LP.
\begin{align*}
\max_{U_0} & \ U_{0}\\
s.t.\ \ &\nu(l)+U_{\mathcal{J}_1} \geq U_{0},\ \ \forall l
\end{align*}
Thus LP (\ref{eqn new 19}) is computed. By using the same method we can prove the LP (\ref{eqn 14}). \qed
\end{proof}
The level of computational complexity of the LP for dual game is the same as the primal game LP.

\subsection{LPs to update sufficient statistic}
\label{LP algorithms to update vector payoffs in truncated finite horizon dual games}
In the window by window method, once we find the optimal strategies for the first window we need to update $\mu_{t+1}=\beta^*_{a_t,b_t}$ and $\nu_{t+1}=\alpha^*_{a_t,b_t}$ stage by stage, as given in Definition \ref{def: vector payoff} with some finite $n$. The optimal vector payoff $\beta^*$ and $\alpha^*$ can be computed by solving the LPs given in the following theorem.
\begin{theorem}
\label{LP2 of dual game}
Consider an $n$-stage type $1$ dual game $\tilde{\Gamma}_{n,\lambda}^1(\mu,q)$. Let $Y^*$ be player $2$'s optimal strategy at stage $1$ in $\tilde{\Gamma}_{n,\lambda}^1(\mu,q)$. Its game value satisfies
\begin{align}
 w^1_{n, \lambda}(\mu, q) &= \min_{Z_0^{a\in \mathcal{A},b\in \mathcal{B}}} \min_{Z_{\mathcal{I}}^{a\in \mathcal{A},b\in \mathcal{B}}} \min_{S^{a\in \mathcal{A},b\in \mathcal{B}}} \min_{\beta_{a\in \mathcal{A},b\in \mathcal{B}}} \min_{ \rho } \rho \label{LP2 object}\\
& s.t. \nonumber \\
& \rho \geq \mu (k) + \sum_{l,b} G_{k,l} (a,b) Y^* (b,l) q(l) + \lambda \sum_{l,b} Y^*(b,l) q(l)\Big(Z_0^{a,b} - \sum_{k'}  \nonumber\\
& P_{a,b}(k,k')\beta_{a,b}(k') \Big);\forall a,k \label{LP2_constraint_1}\\
&\beta_{a,b}(k) + Z_{\mathcal{I}_1}^{a,b} \leq Z_0^{a,b}; \ \ \forall k, \mathcal{I}_1=\lbrace k \rbrace, \forall a,b \label{LP2_constraint_2} \\
& \sum_{l_1,...l_s} \sum_{b_s} S_{\mathcal{J}_s}^{a,b} (b_s) \lambda^{s-1} G_{k_s,l_s}(a_s,b_s)+ \sum_{k_{s+1}} \sum_{b_s}P_{a_s,b_s}(k_s,k_{s+1}) Z_{\mathcal{I}_{s+1}}^{a,b}  \nonumber \\
& \leq Z_{\mathcal{I}_s}^{a,b}; \forall s=1,...n-1, \forall a_s, \forall \mathcal{I}_{s+1} \supset \mathcal{I}_s, \forall a,b \label{LP2_constraint_3}
\end{align}
\begin{align}
& \sum_{b_t} S_{\mathcal{J}_t}^{a,b} (b_t) = Q_{a_{t-1},b_{t-1}}(l_{t-1},l_t) S_{\mathcal{J}_{t-1}}^{a,b}(b_{t-1}); \forall t=2,...,n-1, \forall\mathcal{J}_t, \forall a,b \label{LP2_constraint_4}\\
&\sum_{b_1} S_{\mathcal{J}_1}^{a,b} (b_1)= q^+_{q,Y^*_{a,b}}(l_1); \ \ \forall \mathcal{J}_1, \forall a,b \label{LP2_constraint_5}\\
&Z_{\mathcal{I}_n}^{a,b}=0; \ \forall \mathcal{I}_n; \forall a,b \label{LP2_constraint_6} \\
& S^{a,b}_{\mathcal{J}_t}(b_t) \geq 0; \ \forall \mathcal{J}_t, t=1,...n-1, \forall b_t; \forall a,b \label{LP2_constraint_7}
\end{align}
The vector payoff in stage $2$ is, $\mu_2=\beta^*_{a_1,b_1}$. Similarly, consider an $n$-stage type $2$ dual game $\tilde{\Gamma}_{n,\lambda}^2(p, \nu)$. Let $X^*$ be player $1$'s optimal strategy in stage $1$ in $\tilde{\Gamma}_{n,\lambda}^2(p, \nu)$.  Its game value satisfies
\begin{align*}
&w^2_{n, \lambda}(p, \nu) = \max_{U_0^{a,b}} \max_{U_{\mathcal{J}}^{a,b}} \max_{R^{a,b}} \max_{\alpha_{a,b}} \max_{ \varphi } \varphi\\
& s.t.\\
& \varphi \leq \nu (l) + \sum_{k,a} G_{k,l} (a,b) X^*(a,k) p(k) + \lambda \sum_{k,a}X^*(a,k) p(k)\Big( U_0^{a,b}- \sum_{l'} Q_{a,b}(l,l') \\
&  \alpha_{a,b}(l') \Big);\forall b,l \\
&\alpha_{a,b}(l) + U_{\mathcal{J}_1}^{a,b} \geq U_0^{a,b}; \forall l, \mathcal{J}_1=\lbrace l \rbrace, \forall a,b\\
& \sum_{k_1,...k_s} \sum_{a_s} R_{\mathcal{I}_s}^{a,b} (a_s) \lambda^{s-1} G_{k_s,l_s}(a_s,b_s)+ \sum_{l_{s+1}} \sum_{a_s}Q_{a_s,b_s}(l_s,l_{s+1}) U_{\mathcal{J}_{s+1}}^{a,b} \geq U_{\mathcal{J}_s}^{a,b};\\
& \forall s=1,...n-1, \forall b_s, \forall \mathcal{J}_{s+1} \supset \mathcal{J}_s,\forall a,b\\
& \sum_{a_t} R_{\mathcal{I}_t}^{a,b} (a_t) = P_{a_{t-1},b_{t-1}}(k_{t-1},k_t) R_{\mathcal{I}_{t-1}}^{a,b}(a_{t-1});\forall t=2,...,n-1,\forall\mathcal{J}_t, \forall a,b\\
&\sum_{a_1} R_{\mathcal{I}_1}^{a,b} (a_1)= p^+_{p,X^*_{a,b}}(k_1); \  \forall \mathcal{I}_1, \forall a,b\\
&U_{\mathcal{J}_n}^{a,b}=0; \ \forall \mathcal{J}_n; \forall a,b \\
& R^{a,b}_{\mathcal{I}_t}(a_t) \geq 0; \ \forall \mathcal{I}_t, t=1,...n-1, \forall a_t; \forall a,b
\end{align*}
%where $U_0^{a,b}, U_{\mathcal{J}_t}, \varphi$ are scalar variables, $R_{\mathcal{I}_t}^{a,b}$ is a $|\mathcal{A}|$ dimensional vector variable, $\alpha_{a,b}$ is a $|\mathcal{L}|$ dimensional vector variable.
The vector payoff in stage $2$ is, $\nu_2=\alpha_{a_1,b_1}^*$.
\end{theorem}
The computational complexity of the LPs in Theorem \ref{LP2 of dual game} is the same as that of the primal game LP. We conclude this section with the detailed algorithm of our sufficient statistic based strategies as in Algorithm \ref{Algorithm_p1} and \ref{Algorithm_p2}.
\begin{algorithm}
\label{infinite horizon algorithm}
\caption{Sufficient statistic based strategy algorithm for long finite Bayesian game (Player 1)}
\label{Algorithm_p1}
Initialize $\lambda, p$ and $q$. Set window size $n$.\\
Set $t=1$. Let $p_1=p$ and $q_1=q$. \\
Run Theorem \ref{lp of primal game} with parameter $(p,q,n,\lambda)$ to get the optimal solution $\sigma^*$, and $\alpha^*$. The sub-optimal strategies of player $1$ from stage $1$ to $n$ is $X_{t:n}^{\star}(k)=\sigma_{1:n}^{\star}(k)$. The initial vector payoff is $\nu_1=-\alpha^*$.\\
Player $1$ takes action according to the sub-optimal strategy $X_{t}^{\star}(\mathcal{I}_t)$.\\
Record the action pair $(a_{t},b_{t})$.\\
Run Theorem \ref{LP2 of dual game} with parameters $(\nu_t, p_t, n, \lambda)$ in type $2$ dual game and get the optimal solution $\alpha^*_{a_t, b_t}$. Set $\nu_{t+1}=\alpha^\star_{a_{t},b_{t}}$, and update $p_{t+1}$ using equation (\ref{eq: p+}).\\
Update $t=t+1$.\\
If a new window starts i.e. $t\ mod\ n==1$
update the strategy for the new window as follows. \linebreak
(\RN{1}) If $N-t<n$ then $n=N-t+1$ \linebreak
(\RN{2}) Run Proposition \ref{prop: LP, dual game} with parameters $(\nu_t, p_t, n, \lambda)$ in type $2$ dual game, and get the optimal strategy $\sigma^\star$. Let the suboptimal strategy of player 1 is $X_{t:(t-1+n)}^\star=\sigma^{\star}_{1:n}$\\
If $t-1+n \neq N$ then go to \textit{Step 4}
\end{algorithm}
\begin{algorithm}
\label{infinite horizon algorithm}
\caption{Sufficient statistic based strategy algorithm for long finite Bayesian game (Player 2)}
\label{Algorithm_p2}
Initialize $\lambda, p$ and $q$. Set window size $n$.\\
Set $t=1$. Let $p_1=p$ and $q_1=q$. \\
Run Theorem \ref{lp of primal game} with parameter $(p,q,n,\lambda)$ to get the optimal solution $\tau^*$ and $\beta^*$. The sub-optimal strategy of player $2$ from stage $1$ to $n$ is $Y_{t:n}^{\star}(l)=\tau_{1:n}^{\star}(l)$. The initial vector payoffs of the type 1 dual game is $\mu_1=-\beta^*$.\\
Player $2$ takes action according to the sub-optimal strategy $Y_{t}^{\star}(\mathcal{J}_t)$.\\
Record the action pair $(a_{t},b_{t})$.\\
Run Theorem \ref{LP2 of dual game} with parameters $(\mu_t, q_t, n, \lambda)$ in type $1$ dual game and get the optimal solution $\beta^*_{a_t, b_t}$. Set $\mu_{t+1}=\beta^\star_{a_{t},b_{t}}$, and update $q_{t+1}$ using equation (\ref{eq: q+}).\\
Update $t=t+1$.\\
If a new window starts i.e. $t\ mod\ n==1$
update the strategy for the new window as follows. \linebreak
(\RN{1}) If $N-t<n$ then $n=N-t+1$ \linebreak
(\RN{2}) Run Proposition \ref{prop: LP, dual game} with parameters $(\mu_t, q_t, n, \lambda)$ in type $1$ dual game and get the optimal strategy $\tau^\star$. Let the suboptimal strategy $Y_{t:(t-1+n)}^\star=\tau^{\star}_{1:n}$\\
If $t-1+n \neq N$ then go to \textit{Step 4}
\end{algorithm}
\section{Performance Analysis}
\label{performance analysis}
In this section, we will analyze the performance of Algorithm \ref{Algorithm_p1}, \ref{Algorithm_p2} and observe how accurate the suboptimal strategies of the players are. For this we compare the worst case payoff of these suboptimal strategies with the game value. The worst case performance of player 1 is
\begin{align*}
J_{N}^{\sigma}(p,q)=& \min_{\tau} E \Big( \sum_{t=1}^{N} \lambda^{t-1} G_{k_t,l_t}(a_t,b_t) \Big)
\end{align*}
and the worst case performance of player 2 is
\begin{align*}
J_{N}^{\tau}(p,q)=& \max_{\sigma} E \Big( \sum_{t=1}^{N} \lambda^{t-1} G_{k_t,l_t}(a_t,b_t) \Big)
\end{align*}
Let $\sigma^*$ and $\tau^*$ be the optimal strategies for $N$ stages and $\sigma^{\star}$ and $\tau^{\star}$ be the window-by-window strategies for $N$ stages. The worst case performance of $\sigma^{\star}$ is represented by $J_{N}^{\sigma^{\star}}(p,q)$ and the worst case performance of $\tau^{\star}$ is represented by $J_{N}^{\tau^{\star}}(p,q)$.

\begin{theorem}
\label{performance bound}
If player $2$ takes actions according to window-by-window strategy $\tau^{\star}$ then its worst case performance $J_{N}^{\tau^{\star}}(p,q)$ satisfies
\begin{align}
J_{N}^{\tau^{\star}}(p,q)-v_{N}(p,q) \leq \lambda^n \frac{1-\lambda^{N-n}}{1-\lambda} \bar{G} \label{performance measure of player 2}
\end{align}
If player $1$ takes actions according to window-by-window strategy $\sigma^{\star}$ then its worst case performance $J_{N}^{\sigma^{\star}}(p,q)$ satisfies
\begin{align}
v_{N}(p,q)-J_{N}^{\sigma^{\star}}(p,q) \leq \lambda^n \frac{1-\lambda^{N-n}}{1-\lambda} \bar{G} \label{performance measure of player 1}
\end{align}
where, $\bar{G}= \max_{k,l,a,b} G_{k,l}(a,b)$
\end{theorem}
\begin{proof}
The worst case performance of player $2$ by using the strategy from Algorithm \ref{Algorithm_p2} is
\begin{align*}
J_{N}^{\tau^{\star}}(p,q)=& \max_{\sigma} E \Big( \sum_{t=1}^{N} \lambda^{t-1} G_{k_t,l_t}(a_t,b_t) \Big)\\
\leq & \max_{\sigma} E \Big( \sum_{t=1}^{n} \lambda^{t-1} G_{k_t,l_t}(a_t,b_t) + \sum_{t=n+1}^{N} \lambda^{t-1} \bar{G} \Big)\\
=& \max_{\sigma} E \Big( \sum_{t=1}^{n} \lambda^{t-1} G_{k_t,l_t}(a_t,b_t) + \lambda^n(1+...+ \lambda^{N-n-1}) \bar{G} \Big)\\
=& \max_{\sigma} E \Big( \sum_{t=1}^{n} \lambda^{t-1} G_{k_t,l_t}(a_t,b_t) \Big) + \lambda^n(1+...+ \lambda^{N-n-1}) \bar{G}\\
=& v_n(p,q)+ \lambda^n \frac{1-\lambda^{N-n}}{1- \lambda} \bar{G}
\end{align*}
The last equation holds because the window-by-window strategy uses the optimal strategy in the first window. As $G$ has all non-negative elements, we can say
\begin{align*}
v_{n}(p,q) \leq v_{N}(p,q)
\end{align*}
Therefore,
\begin{align}
J_{N}^{\tau^{\star}}(p,q)- v_{N}(p,q) \leq J_{N}^{\tau^{\star}}(p,q)- v_{n}(p,q) \leq \lambda^n \frac{1-\lambda^{N-n}}{1- \lambda} \bar{G} \label{for performance of P2}
\end{align}
As the window-by-window strategy uses the optimal strategy in the first window and $G$ has all non-negative elements, we can say
\begin{align}
v_n(p,q) \leq J_N^{\sigma^{\star}}(p,q) \label{relation between vn and J sigma prime}
\end{align}
According to the definition of game value, we have
\begin{align}
J^{\sigma^{\star}}_{N}(p,q) \leq v_{N}(p,q) \label{relation between J sigma prime and vN}
\end{align}
From inequality (\ref{relation between vn and J sigma prime}) and (\ref{relation between J sigma prime and vN}) we can write 
\begin{align}
v_n(p,q) \leq J^{\sigma^{\star}}_{N}(p,q) \leq v_{N}(p,q) \leq J^{\tau^{\star}}_{N}(p,q) \label{relation between J sigam prime and J tau prime}
\end{align}
By using inequalities (\ref{for performance of P2}) and (\ref{relation between J sigam prime and J tau prime}) we get
\begin{align}
v_{N}(p,q)-J_{N}^{\sigma^{\star}}(p,q) \leq J^{\tau^{\star}}_{N}(p,q)-v_n(p,q) \leq \lambda^n \frac{1-\lambda^{N-n}}{1- \lambda} \bar{G} \label{for performance of P1}
\end{align}
which completes the proof. \qed
\end{proof}
We can see from theorem \ref{performance bound} that the difference of the game value and the worst case performance of Algorithm \ref{Algorithm_p1} and \ref{Algorithm_p2} decreases with the increasing window size. Meanwhile, the size of the LPs increases with respect to the window size. That's why we need to select the window size carefully so that we can get good performance without exhausting the computational resources. 

\section{Case study}
\label{case study}
Jamming problems in underwater acoustic sensor network is an practical example of strategic game explained in game theory. First, \cite{7417412} used jamming in an underwater sensor network as a Bayesian two-player zero-sum one-shot game and evaluated how the nodes' position (state) effects the equilibrium and \cite{8567999} extended it to a repeated Bayesian game with uncertainties on both sides. In this paper we have used the same network model and formulated it as a two-player zero-sum stochastic game where both players are partially informed.

The network has four sensors $(s_1, s_2, s_3, s_4)$ and one jammer. The goal of the sensors is to transmit data to a sink node by using a shared spectrum at $[10, 40]$ KHz and the goal of the jammer is to block this data transfer. This spectrum is divided into two channels, $B_1=[10,25]$ kHz and $B_2=[25,40]$ kHz. Each channel can be used by one sensor so that at a time only 2 sensors can transmit data. The distance between the sensor and sink node can be 1 km or 5 km. Let, the distances from sink node to $s_1, s_2, s_3$ and $s_4$ are 1 km, 5 km, 1 km and 5 km, respectively. The jammer distance from the sink node can be 0.5 km or 2 km. Let us consider sensors as player 1 and the jammer as player 2. The possible states for player 1 are [1,1], [1,5] and [5,5] and states for player 2 are [0.5] and [2]. For sensors, we are denoting [1,5] and [5,1] as the same state. The initial state probabilities of player 1 and 2 are $p=[0.5\ 0.3\ 0.2]$ and $q=[0.5\ 0.5]$, respectively.
The sensors coordinate with each other to use the channels so that they can maximize the data transmission. Channel 1 is more effective for the sensors which are far away and channel 2 is better for the sensors close by. Though the sensors and the jammer do not know each other’s position, the jammer can observe whether a channel is used by a far away sensor or a close by sensor. For each time period, the jammer can only generate noises in one channel and sensors are able to detect it. The jammer tries to minimize the throughput of the channels. If the sensor state is [1,1], the active sensors are $s_1$ and $s_3$. The feasible actions are $s_1$ using channel 1 while $s_3$ using channel 2 (action 1), and $s_1$ using channel 2 while $s_3$ using channel 1 (action 2). If the sensor state is [5,5], the active sensors are $s_2$ and $s_4$. The feasible actions are $s_2$ using channel 1 while $s_4$ using channel 2 (action 1), and $s_2$ using channel 2 while $s_4$ uses channel 1 (action 2). If the sensor state is [1,5], the feasible actions are faraway sensor using channel 1 while nearby sensor using channel 2 (action 1), and faraway sensor using channel 2 while nearby sensor using channel 1 (action 2). Similarly, the jammer's actions are blocking channel 1 which is action 1 or channel 2 which is action 2. The payoff matrix is in Table \ref{Table: payoff matrix}. As we are modeling this network as a stochastic game there will be transition matrices for both player's state. Table \ref{Table: P transition matrix} and \ref{Table: Q transition matrix} show the transition matrices of sensors' and jammer's state, respectively.

\begin{table}
\center
\caption{The total channel capacity matrix (Payoff matrix), given sensors' type $k$ and jammer's type $l$ For example, $G_{k_t=2, l_t=2}(a_t=1, b_t=1)=24.89$ }
\label{Table: payoff matrix}
\begin{tabular}{|c|cc|cc|}
  \hline
  % after \\: \hline or \cline{col1-col2} \cline{col3-col4} ...
  \backslashbox{k}{l}  & \multicolumn{2}{|c}{1 (0.5 km)} & \multicolumn{2}{|c|}{2 (2 km)} \\ \hline
  \multirow{2}{*}{1 ([1 1] km)} & 108.89 & 113.78 & 122.30 & 154.40 \\
   & 108.89 & 113.78 & 122.30 & 154.40 \\ \hline
  \multirow{2}{*}{2 ([1 5] km)} & 11.48 & 107.38 & 24.89 & 107.42\\
   & 99.04 & 20.15 & 100.26 & 60.77 \\ \hline
  \multirow{2}{*}{3 ([5 5] km)} & 1.64 & 13.75 & 2.85 &13.79 \\
   & 1.64 & 13.75 & 2.85 &13.79 \\
  \hline
\end{tabular}\\
\end{table}
\begin{table}
\center
\caption{Transition matrix for sensor's state (P) For example, $P_{a=1,b=1}(k_{t-1}=2,k_t=2)=0.4$ }
\label{Table: P transition matrix}
\begin{tabular}{|c|ccc|ccc|}
  \hline
  % after \\: \hline or \cline{col1-col2} \cline{col3-col4} ...
\backslashbox{a}{b}  & \multicolumn{3}{|c}{1} & \multicolumn{3}{|c|}{2} \\ \hline
\multirow{3}{*}{1} & 0.8   & 0.1   & 0.1   & 0.4   & 0.5   & 0.1 \\
                   & 0.1   & 0.4   & 0.5   & 0.2   & 0.3   & 0.5 \\
                   & 0.2   & 0.7   & 0.1   & 0.4   & 0.4   & 0.2 \\ \hline
\multirow{3}{*}{2} & 0.2   & 0.2   & 0.6   & 0.3   & 0.3   & 0.4 \\
                   & 0.5   & 0.2   & 0.3   & 0.1   & 0.8   & 0.1 \\
                   & 0.2   & 0.2   & 0.6   & 0.1   & 0.1   & 0.8 \\  \hline
\end{tabular}
\end{table}
\begin{table}
\center
\caption{Transition matrix for jammer's state (Q) For example, $Q_{a=1,b=1}(l_{t-1}=2,l_t=2)=0.5$ }
\label{Table: Q transition matrix}
\begin{tabular}{|c|cc|cc|}
\hline
% after \\: \hline or \cline{col1-col2} \cline{col3-col4} ...
\backslashbox{a}{b}  & \multicolumn{2}{|c}{1} & \multicolumn{2}{|c|}{2} \\ \hline
\multirow{2}{*}{1}  & 0.8   & 0.2   & 0.2   & 0.8 \\
                    & 0.5   & 0.5   & 0.1   & 0.9 \\ \hline
\multirow{2}{*}{2}  & 0.6   & 0.4   & 0.7   & 0.3 \\
                    & 0.5   & 0.5   & 0.1   & 0.9 \\  \hline
\end{tabular}
\end{table}

\subsubsection{Performance bound check}
Because the size of LPs grows exponentially in time horizon, we can only compute the game value for the cases with $N<=4$. For this reason, we set $N=4$ with window size, $n=2$ and $\lambda=0.3$. 

To check the performance bound for player 2, we use extreme case where player 1 is playing according to its optimal strategy which it gets from the primal game LP and player 2 is playing according to window-by-window method (Algorithm \ref{Algorithm_p2}). From Table \ref{Table: payoff matrix} we get $\bar{G}=154.4$. The game value of this game is $v_N(p,q)=112.9049$. According to theorem \ref{performance bound} player 2's performance should satisfy the following condition.
\begin{align}
J_{4}^{\tau^{\star}}(p,q)-112.9049 &\leq 0.3^2 \times \frac{1-0.3^{4-2}}{1-0.3} \times 154.4=18.07 \label{eqn for p2 worst bound check}
\end{align}
We run this game for $500$ times and get the average payoff, $J_{N}^{\tau^{\star}}(p,q)=116.27$ which satisfies equation (\ref{eqn for p2 worst bound check}).

Similarly, for player 1's performance bound check, player 1 is using window-by-window method (Algorithm \ref{Algorithm_p1}) and player 2 is playing optimally. According to theorem \ref{performance bound}, for this game, player 2's performance should satisfy the following condition.
\begin{align}
112.9049-J_{4}^{\sigma^{\star}}(p,q) \leq 0.3^2 \times \frac{1-0.3^{4-2}}{1-0.3} \times 154.4= 18.07
\end{align}
After running the game for $500$ times we get the average payoff $J_{N}^{\sigma^{\star}}(p,q)=105.49$ which satisfies the above condition. From this we can see, even in extreme cases player 1 and 2 satisfy their performance bound.
\subsubsection{Performance comparison of different window size}
To analyze the effect of window size in the players' performance we set a $36$ stage game and run it for window size $n=2$ and $n=3$ using $\lambda$ from $0.1$ to $0.9$. Player 1's goal is to maximize the payoff and player 2's goal is to minimize it. 

To evaluate the effect in player 1's performance we run this game with player 1 playing according to Algorithm \ref{Algorithm_p1} and player 2 playing according to a fixed strategy such that when $l_t=1$ player 2 takes action $1$ and $2$ with probability $[0.9\ 0.1]$ and when $l_t=2$ player 2 takes action $1$ and $2$ with probability $[0.75\ 0.25]$. We run the game for $500$ times and see that (Table \ref{tab:payoff for different window}) for each value of $\lambda$ the bigger window has higher payoff which means player 1's performance is getting better with increasing window size. 

For player 2's performance we run the same game with player 1 playing fixed strategy such that for $k_t=1,2$ and $3$ its probability of taking action 1 and 2 is $[0.9\ 0.1]$, $[0.75\ 0.25]$ and $[0.95\ 0.05]$, respectively and player 2 playing according to Algorithm \ref{Algorithm_p2}. The average payoff we get by running the game for $500$ times is in Table \ref{tab:payoff for different window}. We can see for $\lambda=0.1\ to\ 0.4$ the payoff is lower for the large window which means player 2's performance is getting better with increasing window size. But for $\lambda>0.4$ this effect is not visible as with increasing $\lambda$ the payoff increases exponentially. It is clear that both players' performance get better with increasing window size. But we have to keep in mind that large window also increase computational complexity. 
\begin{table}
\centering
\caption{Payoff for different window size}
\label{tab:payoff for different window}
\resizebox{\textwidth}{!}{%
\begin{tabular}{|c|c|c|c|c|c|c|c|c|c|}
\hline
\textbf{$\lambda$} & \textbf{0.1} & \textbf{0.2} & \textbf{0.3} & \textbf{0.4} & \textbf{0.5} & \textbf{0.6} & \textbf{0.7} & \textbf{0.8} & \textbf{0.9} \\ \hline
\textbf{P1 Optimal Payoff (n=2)}  & 87.38        & 100.93       & 120.44       & 139.29       & 160.78       & 196.74       & 250.33       & 364.06       & 677.79       \\ \hline
\textbf{P1 Optimal Payoff (n=3)}  & 91.47        & 104.03       & 124.08       & 144.85       & 178.09       & 215.63       & 274.19       & 388.49       & 750.85       \\ \hline
\textbf{P2 Optimal Payoff (n=2)}  & 89.95        & 98.24        & 114.04       & 130.65       & 148.99       & 185.30       & 244.05       & 355.51            & 689.66            \\ \hline
\textbf{P2 Optimal Payoff (n=3)}  & 85.65        & 96.40        & 113.84       & 127.98       & 153.62       & 190.55       & 244.29       & 365.12            & 708.04          \\ \hline
\end{tabular}%
}
\end{table}
\section{Conclusion}
This paper analyzes two-player zero-sum stochastic Bayesian games. The traditional methods used for computing the optimal strategies in Bayesian games are computationally heavy and cannot perform well for long horizon games. This paper introduces a computationally efficient Algorithm which can solve long Bayesian games and it's sufficient statistic is fully accessible to the players. It solves the game window-by-window and update players' sufficient statistic using LP, based on which players can compute their optimal strategies. Thus it solves the accessibility issue which is a major problem of Bayesian game. The performance of this Algorithm is within the bound and gets better with increasing window size. 
\section{Appendix}
\begin{lemma}
\label{Lemma 4.2}
The weighted payoffs $ U_{\mathcal{J}_t}(\sigma,\tau) $ and $Z_{\mathcal{I}_t}(\sigma,\tau)$ satisfy the following recursive formulas.
\begin{align}
U_{\mathcal{J}_t}(\sigma,\tau)=& \sum_{b_t} \Big(\sum_{a_t} \sum_{k_1,...k_t} R_{\mathcal{I}_t}(a_t)\lambda^{t-1}G_{k_t,l_t}(a_t,b_t)+ \sum_{a_t} \sum_{l_{t+1}} Q_{a_t,b_t}(l_t,l_{t+1}) \nonumber \\
& U_{\mathcal{J}_{t+1}}(\sigma,\tau)\Big) \tau_{t}^{b_t}(\mathcal{J}_t); \ \ \ with\ U_{ \mathcal{J}_{N+1}}(\sigma,\tau)=0 \label{eqn 9} \\
Z_{\mathcal{I}_t}(\sigma,\tau)=& \sum_{a_t} \Big(\sum_{b_t} \sum_{l_1,...l_t} S_{\mathcal{J}_t}(b_t)\lambda^{t-1} G_{k_t,l_t}(a_t,b_t)+ \sum_{b_t} \sum_{k_{t+1}} P_{a_t,b_t}(k_t,k_{t+1}) \nonumber \\
& Z_{\mathcal{I}_{t+1}}(\sigma,\tau)\Big) \sigma_{t}^{a_t}(\mathcal{I}_t);\ \ \ with\ Z_{\mathcal{J}_{N+1}}(\sigma,\tau)=0
\end{align}
\end{lemma}
\begin{proof}
According to the definition, we have
\begin{align*}
U_{\mathcal{J}_{t+1}}(\sigma,\tau)=& \sum_{k_1,...k_{t+1}} R_{\mathcal{I}_{t}}(a_{t})P_{a_{t},b_{t}}(k_{t},k_{t+1}) E\Big(\sum_{s=t+1}^{N}\lambda^{s-1}G_{k_s,l_s}(a_s,b_s)|k_1,...k_{t+1},\mathcal{J}_{t+1} \Big)\\
U_{\mathcal{J}_{t}}(\sigma,\tau)=& \sum_{k_1,...k_t} R_{\mathcal{I}_{t-1}}(a_{t-1})P_{a_{t-1},b_{t-1}}(k_{t-1},k_{t}) E\Big(\sum_{s=t}^{N} \lambda^{s-1} G_{k_s,l_s}(a_s,b_s)|k_1,...k_t,\mathcal{J}_t \Big)\\
=& \sum_{k_1,...k_t} R_{\mathcal{I}_{t-1}}(a_{t-1})P_{a_{t-1},b_{t-1}}(k_{t-1},k_{t}) E\Big(\lambda^{t-1} G_{k_t,l_t}(a_t,b_t)|k_1,...k_t,\mathcal{J}_t \Big)+\\
& \sum_{k_1,...k_t} R_{\mathcal{I}_{t-1}}(a_{t-1}) P_{a_{t-1},b_{t-1}}(k_{t-1},k_{t}) E\Big(\sum_{s=t+1}^{N}\lambda^{s-1}G_{k_s,l_s}(a_s,b_s)|k_1,...k_t,\mathcal{J}_t \Big)\\
Term\ 1:& \sum_{k_1,...k_t} R_{\mathcal{I}_{t-1}}(a_{t-1})P_{a_{t-1},b_{t-1}}(k_{t-1},k_{t}) E\Big(\lambda^{t-1} G_{k_t,l_t}(a_t,b_t)|k_1,...k_t,\mathcal{J}_t \Big)\\
=& \sum_{k_1,...k_t} R_{\mathcal{I}_{t-1}}(a_{t-1})P_{a_{t-1},b_{t-1}}(k_{t-1},k_{t})\sum_{a_t,b_t} \lambda^{t-1}G_{k_t,l_t}(a_t,b_t) Pr(a_t,b_t|k_1,...k_t,\mathcal{J}_t)\\
=& \sum_{k_1,...k_t} R_{\mathcal{I}_{t-1}}(a_{t-1})P_{a_{t-1},b_{t-1}}(k_{t-1},k_{t})\sum_{a_t,b_t}\lambda^{t-1} G_{k_t,l_t}(a_t,b_t)\sigma_{t}^{a_t}(\mathcal{I}_t) \tau_{t}^{b_t}(\mathcal{J}_t)\\
=& \sum_{a_t,b_t} \sum_{k_1,...k_t} R_{\mathcal{I}_{t}}(a_{t}) \lambda^{t-1} G_{k_t,l_t}(a_t,b_t)\tau_{t}^{b_t}(\mathcal{J}_t)
\end{align*}
\begin{align*}
Term\ 2 :& \sum_{k_1,...k_t} R_{\mathcal{I}_{t-1}}(a_{t-1})P_{a_{t-1},b_{t-1}}(k_{t-1},k_{t}) E\Big(\sum_{s=t+1}^{N} \lambda^{s-1} G_{k_s,l_s}(a_s,b_s)|k_1,...k_t,\mathcal{J}_t \Big)\\
=& \sum_{k_1,...k_t} R_{\mathcal{I}_{t-1}}(a_{t-1}) P_{a_{t-1},b_{t-1}}(k_{t-1},k_{t})\sum_{k_{t+1},l_{t+1}} \sum_{a_t,b_t} P_{a_t,b_t}(k_t,k_{t+1})Q_{a_t,b_t}(l_t,l_{t+1})  \\
& \sigma_{t}^{a_t}(\mathcal{I}_t) \tau_{t}^{b_t}(\mathcal{J}_t) E\Big( \sum_{s=t+1}^{N} \lambda^{s-1} G_{k_s,l_s}(a_s,b_s)|k_1,...k_{t+1},\mathcal{J}_{t+1}\Big)\\
=& \sum_{l_{t+1},a_t,b_t} \Big(\sum_{k1,...k_{t+1}} R_{\mathcal{I}_t}(a_t)P_{a_t,b_t}(k_t,k_{t+1}) E(\sum_{s=t+1}^{N}  \lambda^{s-1} G_{k_s,l_s}(a_s,b_s)|k_1,...k_{t+1},
\end{align*}
\begin{align*}
& \mathcal{J}_{t+1})\Big)Q_{a_t,b_t}(l_t,l_{t+1}) \tau_{t}^{b_t}(\mathcal{J}_t)\\
=& \sum_{a_t,b_t} \sum_{l_{t+1}} Q_{a_t,b_t}(l_t,l_{t+1}). U_{\mathcal{J}_{t+1}}.\tau_{t}^{b_t}(\mathcal{J}_t)
\end{align*}
So, equation (\ref{eqn 9}) is proved. Following the similar method, we can prove the recursive formula of $ Z_{\mathcal{I}_t}(\sigma,\tau) $. \qed
\end{proof}
\subsection{\textbf{Proof of Lemma \ref{lemma: LP, U star, Z star}}}
\begin{proof}
Let us define two optimization problems $P1$ and $P2$.
\begin{align}
P1: \bar{U}_{\mathcal{J}_t}(\sigma)=& \max_{U} U_{\mathcal{J}_t} \nonumber \\
s.t.
& \sum_{k_1...k_s} \sum_{a_s} R_{\mathcal{I}_s}(a_s) \lambda^{s-1}G_{k_s,l_s}(a_s,b_s) + \sum_{l_{s+1}} \nonumber \\
& \sum_{a_s} Q_{a_s,b_s}(l_s,l_{s+1}) U_{\mathcal{J}_{s+1}} \geq U_{\mathcal{J}_s}; \quad \forall s=t,...N \label{eqn 11}
\end{align}
\noindent We know,
\begin{align*}
U^{*}_{\mathcal{J}_t}(\sigma) =& \min_{\tau_{t}} \sum_{b_t} \Big(\sum_{a_t} \sum_{k_1,...k_t} R_{\mathcal{I}_t}(a_t)\lambda^{t-1} G_{k_t,l_t}(a_t,b_t)+ \\
& \sum_{a_t} \sum_{l_{t+1}} Q_{a_t,b_t}(l_t,l_{t+1})U_{\mathcal{J}_{t+1}}^{*}(\sigma)\Big) \tau_{t}^{b_t}(\mathcal{J}_t);\\
s.t. \
& 1^T \tau_{N}^{b_t}(\mathcal{J_N}) =1\\
& \tau_{N}^{b_t}(\mathcal{J_N}) \geq 0 ;\ \ \forall b_N   \\
\end{align*}
The dual of this LP is,
\begin{align*}
P2: U_{\mathcal{J}_t}^{*}(\sigma)=& \max_{ U_{\mathcal{J}_t}} U_{\mathcal{J}_t} \\
s.t.
& \sum_{k_1...k_t} \sum_{a_t} R_{\mathcal{I}_t}(a_t)\lambda^{t-1}G_{k_t,l_t}(a_t,b_t)+ \sum_{l_{t+1}} \sum_{a_t}\\
& Q_{a_t,b_t}(l_t,l_{t+1}) U^{*}_{\mathcal{J}_{t+1}} \geq U_{\mathcal{J}_t};\ \  \forall b_t
\end{align*}
\noindent For $ t=N $,
\begin{align*}
\bar{U}_{\mathcal{J}_N}(\sigma) =& \max_{U} U_{\mathcal{J}_N}\\
& s.t.\\
\sum_{k_1...k_N} \sum_{a_N}& R_{\mathcal{I}_N}(a_N) \lambda^{N-1}G_{k_N,l_N}(a_N,b_N) \geq U_{\mathcal{J}_N}(\sigma);\ \ \forall b_N
\end{align*}
\begin{align*}
U_{\mathcal{J}_N}^{*}(\sigma)=& \max_{U} U_{\mathcal{J}_N} \\
& s.t.\\
\sum_{k_1...k_N} \sum_{a_N} & R_{\mathcal{I}_N}(a_N)\lambda^{N-1}G_{k_N,l_N}(a_N,b_N) \geq U_{\mathcal{J}_N}(\sigma); \ \ \forall b_N
\end{align*}
\noindent From this we can see, for $t=N$, $\bar{U}_{\mathcal{J}_N}(\sigma)= U_{\mathcal{J}_N}^{*}(\sigma)$. Let us assume, $U^{*}_{\mathcal{J}_{t+1}}= \bar{U}_{\mathcal{J}_{t+1}} $.
\begin{align*}
P3: U^{*}_{\mathcal{J}_{t+1}}(\sigma)=& \max_{U} U_{\mathcal{J}_{t+1}}\\
s.t.
& \sum_{k_1...k_s} \sum_{a_s} R_{\mathcal{I}_s}(a_s) \lambda^{s-1} G_{k_s,l_s}(a_s,b_s)+ \sum_{l_{s+1}} \sum_{a_s} Q_{a_s,b_s}(l_s,l_{s+1}) U_{\mathcal{J}_{s+1}}\\
& \geq U_{\mathcal{J}_s};\ \ \forall s=(t+1)...N, \forall b_s,\mathcal{J}_s\supset J_{t+1}
\end{align*}
Let,$ \bar{U}_{\mathcal{J}_t}, \bar{U}_{\mathcal{J}_{t+1}},...\bar{U}_{\mathcal{J}_N}$ be the optimal solution of $P1$. $U^{*}_{\mathcal{J}_t}$ be the optimal solution of $P2$.$ \bar{U}^{*}_{\mathcal{J}_{t+1}},...\bar{U}^{*}_{\mathcal{J}_N}$ be the optimal solution of $P3$. Here, $U^{*}$ is feasible in $P1$. So, $\bar{U}_{\mathcal{J}_t}\geq U^{*}_{\mathcal{J}_t} $. $\bar{U}_{\mathcal{J}_{t+1}}...\bar{U}_{\mathcal{J}_N}$ is feasible in $P3$. So, $U^{*}_{\mathcal{J}_{t+1}}\geq \bar{U}_{\mathcal{J}_{t+1}}$. For $s=t$, we can write equation (\ref{eqn 11}) as,
\begin{align*}
& \sum_{k_1...k_t} \sum_{a_t} R_{\mathcal{I}_t}(a_t) \lambda^{t-1}G_{k_t,l_t}(a_t,b_t) + \sum_{l_{t+1}}
\sum_{a_t} Q_{a_t,b_t}(l_t,l_{t+1})\bar{U}_{\mathcal{J}_{t+1}}(\sigma) \geq U_{\mathcal{J}_t}(\sigma);\\
& \therefore \sum_{k_1...k_t} \sum_{a_t} R_{\mathcal{I}}(a_t) \lambda^{t-1}G_{k_t,l_t}(a_t,b_t) + \sum_{l_{t+1}} \sum_{a_t} Q_{a_t,b_t}(l_t,l_{t+1}) U^{*}_{\mathcal{J}+1}(\sigma) \geq U_{\mathcal{J}_t}(\sigma)
\end{align*}
So, $\bar{U}_{\mathcal{J}_t}$ is feasible in $P2$. Then $U^{*}_{\mathcal{J}_t}\geq \bar{U}_{\mathcal{J}_t}$.Therefore, $ \bar{U}_{\mathcal{J}_{t}}= U^{\ast}_{\mathcal{J}_{t}}$, which completes the proof.Similarly we can proof the LP for $Z^*_{\mathcal{I}_t}$. \qed
\end{proof}
\subsection{\textbf{Proof of Theorem \ref{LP2 of dual game}}}
\begin{proof}
The recursive formula of dual game equation (\ref{dual_recursive_type_1}) can be written as,
\begin{align}
w^1_{n, \lambda}(\mu, q) =& \min_Y \min_{\beta_{a\in \mathcal{A},b \in \mathcal{B}}} \max_{\Pi} \sum_{a,k} \Pi(a,k) \Big[ \mu (k) + \sum_{l,b} G_{k,l} (a,b) \nonumber \\
& Y(b,l) q(l) + \lambda \sum_{l,b} Y(b,l) q(l) \Big( w^1_{n-1 , \lambda} (\beta_{a,b}, q^+_{q,Y}) \nonumber \\
& - \sum_{k'} P_{a,b}(k,k') \beta_{a,b}(k') \Big) \Big] \label{recursive formula of dual game}
\end{align}
From the LP in Proposition \ref{lp of primal game}, we can find the optimal strategy of player $2$, $Y^*$ for $n=1$ stage. Then we can write the above equation as,
\begin{align*}
w^1_{n, \lambda}(\mu, q)
=& \min_{\beta_{a \in \mathcal{A},b \in \mathcal{B}}} \max_{\Pi} \sum_{a,k} \Pi(a,k) \Big[ \mu (k) + \sum_{l,b} G_{k,l} (a,b) Y^* \\
&(b,l) q(l) + \lambda \sum_{l,b} Y^*(b,l) q(l) \Big( w^1_{n-1 , \lambda} (\beta_{a,b}, q^+_{q,Y^*}) - \\
& \sum_{k'} P_{a,b}(k,k') \beta_{a,b}(k') \Big) \Big]
\end{align*}
The dual LP of the above equation is as follows.
\begin{align}
w^1_{n, \lambda}(\mu, q)=& \min_{\beta_{a \in \mathcal{A},b \in \mathcal{B}}} \min_{ \rho } \rho \label{eqn dual game value LP}\\
&s.t. \nonumber \\
& \rho \geq \mu (k) + \sum_{l,b} G_{k,l} (a,b) Y^*(b,l)q(l) + \lambda \sum_{l,b} Y^*(b,l)q(l) \Big( w^1_{n-1 , \lambda} (\beta_{a,b}, \nonumber \\
& q^+_{q,Y_{a,b}^*})- \sum_{k'} P_{a,b}(k,k')\beta_{a,b}(k') \Big); \forall a,k \nonumber \\
\Rightarrow& \rho \geq \mu (k) + \sum_{l,b} G_{k,l} (a,b) Y^*(b,l) q(l) + \lambda \sum_{l,b} Y^*(b,l)q(l) \Big( Z_0^* (\beta_{a,b}, q^+_{q,Y_{a,b}^*}) \nonumber \\
& - \sum_{k'} P_{a,b}(k,k')\beta_{a,b}(k') \Big); \forall a,k \nonumber
\end{align}
where for any $a\in \mathcal{A}$ and $b\in \mathcal{B}$,
\begin{align*}
Z_0^* (\beta_{a,b}, q^+_{q,Y^*_{a,b}})=& \min_{Z_0^{a,b}} \min_{Z_{\mathcal{I}}^{a,b}} \min_{S^{a,b}} Z_0^{a,b}\\
&s.t.\\
&\beta_{a,b}(k) + Z_{\mathcal{I}_1}^{a,b} \leq Z_0^{a,b};\ \ \forall k, \mathcal{I}_1=\lbrace k \rbrace \\
& \sum_{l_1,...l_s} \sum_{b_s} S_{\mathcal{J}_s}^{a,b} (b_s) \lambda^{s-1} G_{k_s,l_s}(a_s,b_s)+ \sum_{k_{s+1}} \sum_{b_s} P_{a_s,b_s}(k_s,k_{s+1}) Z_{\mathcal{I}_{s+1}}^{a,b}  \\
& \leq Z_{\mathcal{I}_s}^{a,b}; \forall s=1:n-1, \forall a_s, \forall \mathcal{I}_{s+1} \supset \mathcal{I}_s\\
& \sum_{b_t} S_{\mathcal{J}_t}^{a,b} (b_t) = Q_{a_{t-1},b_{t-1}}(l_{t-1},l_t) S_{\mathcal{J}_{t-1}}^{a,b} (b_{t-1}); \forall t=2,...,n-1; \forall\mathcal{J}_t\\
&\sum_{b_1} S_{\mathcal{J}_1}^{a,b} (b_1)= q^+_{q,Y^*_{a,b}}(l_1);\ \forall \mathcal{J}_1\\
&Z_{\mathcal{I}_n}^{a,b}=0;\ \forall \mathcal{I}_n\\
& S^{a,b}_{\mathcal{J}_t}(b_t) \geq 0;\ \forall \mathcal{J}_t,b_t,t=1,...n-1
\end{align*}
We can split the LP (\ref{eqn dual game value LP}) in two parts $P_1$ and $P_3$.
\begin{align}
P_1: &w^1_{n, \lambda}(\mu, q)= \min_{\beta_{a \in \mathcal{A},b \in \mathcal{B}}} \min_{ \rho } \rho\\
& s.t. \nonumber \\
& \rho \geq \mu (k) + \sum_{l,b} G_{k,l} (a,b) Y^*(b,l) q(l) + \lambda \sum_{l,b} Y^*(b,l)q(l) \Big( Z_0^* (\beta_{a,b}, q^+_{q,Y_{a,b}^*})- \nonumber \\
&  \sum_{k'} P_{a,b}(k,k')\beta_{a,b}(k') \Big); \forall a,k \label{A}\\
P_3: &Z_0^* (\beta_{a,b}, q^+_{q,Y^*_{a,b}})= \min_{Z_0^{a,b}} \min_{Z_{\mathcal{I}}^{a,b}} \min_{S^{a,b}} Z_0^{a,b}; \ \ \forall a,b\\
& s.t.\\
&\beta_{a,b}(k) + Z_{\mathcal{I}_1}^{a,b} \leq Z_0^{a,b};\ \ \forall k, \mathcal{I}_1=\lbrace k \rbrace \label{B} \\
& \sum_{l_1,...l_s} \sum_{b_s} S_{\mathcal{J}_s}^{a,b} (b_s) \lambda^{s-1} G_{k_s,l_s}(a_s,b_s) \nonumber + \sum_{k_{s+1}} \sum_{b_s} P_{a_s,b_s}(k_s,k_{s+1}) Z_{\mathcal{I}_{s+1}}^{a,b} \leq Z_{\mathcal{I}_s}^{a,b};\nonumber
\end{align}
\begin{align}
& \forall s=1,...n-1, \forall a_s, \forall \mathcal{I}_{s+1} \supset \mathcal{I}_s \label{C}\\ 
& \sum_{b_t} S_{\mathcal{J}_t}^{a,b} (b_t) = Q_{a_{t-1},b_{t-1}}(l_{t-1},l_t) S_{\mathcal{J}_{t-1}}^{a,b}(b_{t-1});\ \  \forall t=2,...,n-1; \forall\mathcal{J}_t \label{D} \\
&\sum_{b_1} S_{\mathcal{J}_1}^{a,b} (b_1)= q^+_{q,Y^*_{a,b}}(l_1);\ \ \forall \mathcal{J}_1 \label{E} \\
&Z_{\mathcal{I}_n}^{a,b}=0;\ \ \forall \mathcal{I}_n \label{F}\\
& S^{a,b}_{\mathcal{J}_t}(b_t) \geq 0;\ \ \forall \mathcal{J}_t,b_t,t=1,...n-1 \label{G}
\end{align}
Let,
\begin{align}
P_2: &w^1_{n, \lambda}(\mu, q) \min_{Z_0^{a \in \mathcal{A},b \in \mathcal{B}}} \min_{Z_{\mathcal{I}}^{a \in \mathcal{A},b \in \mathcal{B}}} \min_{S^{a \in \mathcal{A},b \in \mathcal{B}}} \min_{\beta_{a \in \mathcal{A},b \in \mathcal{B}}} \min_{ \rho } \rho\\
& s.t.\\
& \rho \geq \mu (k) + \sum_{l,b} G_{k,l} (a,b) Y^*(b,l) q(l) + \lambda \sum_{l,b}Y^*(b,l) q(l) \Big( Z_0^{a,b}- \sum_{k'} P_{a,b}(k,k') \nonumber \\
& \beta_{a,b}(k') \Big); \forall a,k\label{i}\\
&\beta_{a,b}(k) + Z_{\mathcal{I}_1}^{a,b} \leq Z_0^{a,b}; \ \ \forall k, \mathcal{I}_1=\lbrace k \rbrace, \forall a,b \label{j} \\
& \sum_{l_1,...l_s} \sum_{b_s} S_{\mathcal{J}_s}^{a,b} (b_s) \lambda^{s-1} G_{k_s,l_s}(a_s,b_s)+ \sum_{k_{s+1}} \sum_{b_s}P_{a_s,b_s}(k_s,k_{s+1}) Z_{\mathcal{I}_{s+1}}^{a,b} \leq Z_{\mathcal{I}_s}^{a,b}; \nonumber\\
&   \forall s=1,...n-1, \forall a_s, \forall \mathcal{I}_{s+1} \supset \mathcal{I}_s, \forall a,b \label{k} \\
& \sum_{b_t} S_{\mathcal{J}_t}^{a,b} (b_t) = Q_{a_{t-1},b_{t-1}}(l_{t-1},l_t) S_{\mathcal{J}_{t-1}}^{a,b}(b_{t-1}); \forall t=2,...,n-1; \forall\mathcal{J}_t, \forall a,b \label{l}\\
&\sum_{b_1} S_{\mathcal{J}_1}^{a,b} (b_1)= q^+_{q,Y^*_{a,b}}(l_1);\ \ \forall \mathcal{J}_1, \forall a,b \label{m}\\
&Z_{\mathcal{I}_n}^{a,b}=0; \ \ \forall \mathcal{I}_n; \forall a,b \label{n} \\
& S^{a,b}_{\mathcal{J}_t}(b_t) \geq 0;\ \ \forall \mathcal{J}_t,b_t,t=1,...n-1, \forall b_t; \forall a,b \label{o}
\end{align}
Let the optimal solution for $P_2$ be $\bar{\beta}_{a,b}, \bar{S}_{\mathcal{J}}^{a,b}, \bar{Z}_{\mathcal{I}}^{a,b}, \bar{Z_0}^{a,b}, \bar{\rho}$. It is easy to verify that the optimal solutions of $P_2$ are feasible solutions in $P_3$ because constraints (\ref{i}-\ref{o}) are same as (\ref{B}-\ref{G}).
\begin{align}
\therefore \bar{Z}^{a,b}_0 \geq Z_0^* (\beta_{a,b}, q^+_{q,Y^*_{a,b}}) \label{relation_P2_P3}
\end{align}
$P_1$ has $2$ variables $\beta, \rho$ and $1$ constraint equation (\ref{A}). $P_2$'s optimal solution must satisfy its constraint (\ref{i}). Hence,
\begin{align*}
\bar{\rho} &\geq \mu (k) + \sum_{l,b} G_{k,l} (a,b) Y^*(b,l) q(l) + \lambda \sum_{l,b} Y^*(b,l) q(l)\\
& \Big( \bar{Z}_0^{a,b} - \sum_{k'} P_{a,b}(k,k')\bar{\beta}_{a,b}(k') \Big);\ \forall a,k
\end{align*}
Equation (\ref{relation_P2_P3}) implies that
\begin{align*}
 \bar{\rho} &\geq \mu (k) + \sum_{l,b} G_{k,l} (a,b) Y^*(b,l) q(l) + \lambda \sum_{l,b} Y^*(b,l) q(l) \\
&\Big( Z_0^* (\bar{\beta}_{a,b}, q^+_{q,Y_{a,b}^*})- \sum_{k'} P_{a,b}(k,k')\bar{\beta}_{a,b}(k') \Big);\ \forall a,k
\end{align*}
The equation above is $P_1$'s constraint (\ref{A}). So, $P_2$'s optimal solution is feasible in $P_1$. If $\rho^*$ is the optimal solution for $P_1$ then,
\begin{align}
\bar{\rho} \geq \rho^* \label{rho bar geq rho*}
\end{align}
Let the optimal solution of $P_1$ be $\beta_{a,b}^*, \rho^*$ and its nested LP $P_3$ has optimal solution ${S_{\mathcal{J}}^{a,b}}^*, {Z_{\mathcal{I}}^{a,b}}^*, {Z^{a,b}_0}^*$. It is easy to verify that $\beta_{a,b}^*, \rho^*, {S_{\mathcal{J}}^{a,b}}^*, {Z_{\mathcal{I}}^{a,b}}^*, {Z^{a,b}_0}^*$ are feasible in $P_2$. So we have $\rho^* \geq \bar{\rho}$. Together with equation (\ref{rho bar geq rho*}), we know that
\begin{align*}
\rho^*= \bar{\rho}
\end{align*}
It completes the proof of equation (\ref{LP2 object}-\ref{LP2_constraint_7}). Similarly, we can show the LP for type 2 dual game $w^2_{n, \lambda}(p, \nu)$.
\end{proof}

\bibliographystyle{abbrv}
\bibliography{myref}

\begin{thebibliography}{10}

\bibitem{aumann1995repeated}
R.~J. Aumann, M.~Maschler, and R.~E. Stearns.
\newblock {\em Repeated games with incomplete information}.
\newblock MIT press, 1995.

\bibitem{aziz2020resilience}
F.~M. Aziz, L.~Li, J.~S. Shamma, and G.~L. St{\"u}ber.
\newblock Resilience of lte enode b against smart jammer in infinite-horizon
  asymmetric repeated zero-sum game.
\newblock {\em Physical Communication}, 39:100989, 2020.

\bibitem{charles2013budget}
D.~Charles, D.~Chakrabarty, M.~Chickering, N.~R. Devanur, and L.~Wang.
\newblock Budget smoothing for internet ad auctions: a game theoretic approach.
\newblock In {\em Proceedings of the fourteenth ACM conference on Electronic
  commerce}, pages 163--180. ACM, 2013.

\bibitem{de1996repeated}
B.~De~Meyer.
\newblock Repeated games, duality and the central limit theorem.
\newblock {\em Mathematics of Operations Research}, 21(1):237--251, 1996.

\bibitem{de1999cav}
B.~De~Meyer and D.~Rosenberg.
\newblock “cav u” and the dual game.
\newblock {\em Mathematics of Operations Research}, 24(3):619--626, 1999.

\bibitem{eksin2017disease}
C.~Eksin, J.~S. Shamma, and J.~S. Weitz.
\newblock Disease dynamics in a stochastic network game: a little empathy goes
  a long way in averting outbreaks.
\newblock {\em Scientific reports}, 7:44122, 2017.

\bibitem{feddersen2006theory}
T.~Feddersen and A.~Sandroni.
\newblock A theory of participation in elections.
\newblock {\em American Economic Review}, 96(4):1271--1282, 2006.

\bibitem{gensbittel2015value}
F.~Gensbittel and J.~Renault.
\newblock The value of markov chain games with incomplete information on both
  sides.
\newblock {\em Mathematics of Operations Research}, 40(4):820--841, 2015.

\bibitem{8567999}
L.~{Li}, C.~{Langbort}, and J.~{Shamma}.
\newblock An lp approach for solving two-player zero-sum repeated bayesian
  games.
\newblock {\em IEEE Transactions on Automatic Control}, 64(9):3716--3731, Sep.
  2019.

\bibitem{li2019efficient}
L.~Li and J.~S. Shamma.
\newblock Efficient strategy computation in zero-sum asymmetric information
  repeated games.
\newblock {\em IEEE Transactions on Automatic Control}, 2020.

\bibitem{mertens1994repeated}
J.-F. Mertens, S.~Sorin, S.~Zamir, et~al.
\newblock Repeated games. part a: Background material.
\newblock Technical report, Universit{\'e} catholique de Louvain, Center for
  Operations Research and~…, 1994.

\bibitem{mertens1971value}
J.-F. Mertens and S.~Zamir.
\newblock The value of two-person zero-sum repeated games with lack of
  information on both sides.
\newblock {\em International Journal of Game Theory}, 1(1):39--64, 1971.

\bibitem{doi:10.1287/mnsc.1040.0297}
R.~B. Myerson.
\newblock Comments on “games with incomplete information played by
  ‘bayesian’ players, i–iii harsanyi's games with incoplete
  information”.
\newblock {\em Management Science}, 50(12\_supplement):1818--1824, 2004.

\bibitem{pita2009using}
J.~Pita, M.~Jain, F.~Ord{\'o}nez, C.~Portway, M.~Tambe, C.~Western,
  P.~Paruchuri, and S.~Kraus.
\newblock Using game theory for los angeles airport security.
\newblock {\em AI magazine}, 30(1):43--43, 2009.

\bibitem{2226507620060801}
J.~Renault.
\newblock The value of markov chain games with lack of information on one side.
\newblock {\em Mathematics of Operations Research}, 31(3):490 -- 512, 2006.

\bibitem{Rosenberg1998}
D.~Rosenberg.
\newblock Duality and markovian strategies.
\newblock {\em International Journal of Game Theory}, 27(4):577--597, Dec 1998.

\bibitem{rosenberg2004stochastic}
D.~Rosenberg, E.~Solan, and N.~Vieille.
\newblock Stochastic games with a single controller and incomplete information.
\newblock {\em SIAM journal on control and optimization}, 43(1):86--110, 2004.

\bibitem{sorin2002first}
S.~Sorin.
\newblock {\em A first course on zero-sum repeated games}, volume~37.
\newblock Springer Science \& Business Media, 2002.

\bibitem{sorin2003stochastic}
S.~Sorin.
\newblock Stochastic games with incomplete information.
\newblock In {\em Stochastic Games and applications}, pages 375--395. Springer,
  2003.

\bibitem{7417412}
V.~{Vadori}, M.~{Scalabrin}, A.~V. {Guglielmi}, and L.~{Badia}.
\newblock Jamming in underwater sensor networks as a bayesian zero-sum game
  with position uncertainty.
\newblock In {\em 2015 IEEE Global Communications Conference (GLOBECOM)}, pages
  1--6, Dec 2015.

\bibitem{von1996efficient}
B.~Von~Stengel.
\newblock Efficient computation of behavior strategies.
\newblock {\em Games and Economic Behavior}, 14(2):220--246, 1996.

\bibitem{zamir1992repeated}
S.~Zamir.
\newblock Repeated games of incomplete information: Zero-sum.
\newblock {\em Handbook of Game Theory with Economic Applications}, 1:109--154,
  1992.

\end{thebibliography}


\begin{thebibliography}{1}

\bibitem{2226507620060801}
J.~Renault.
\newblock The value of markov chain games with lack of information on one side.
\newblock {\em Mathematics of Operations Research}, 31(3):490 -- 512, 2006.

\bibitem{sion1958general}
M.~Sion et~al.
\newblock On general minimax theorems.
\newblock {\em Pacific Journal of mathematics}, 8(1):171--176, 1958.

\end{thebibliography}
\end{document}